
\documentclass[preprint,12pt]{elsarticle}




\usepackage{amssymb}
\usepackage{amsmath}

\usepackage{amsthm}


\journal{Theoretical Computer Science}

\usepackage{lineno}
\usepackage{subcaption}

\usepackage{etoolbox}
\apptocmd{\sloppy}{\hbadness 10000\relax}{}{}


\usepackage{hyperref}
\hypersetup{colorlinks=true}

\newtheorem{theorem}{Theorem}
\newtheorem{corollary}[theorem]{Corollary}

\newtheorem{claim}[theorem]{Claim}
\newtheorem{observation}[theorem]{Observation}
\newtheorem{proposition}[theorem]{Proposition}

\newtheorem{lemma}[theorem]{Lemma}

\newtheorem{problem}{Problem}

\theoremstyle{definition}
\newtheorem{definition}[theorem]{Definition}

\begin{document}

\begin{frontmatter}



\title{On 1-Planar Graphs with Bounded Cop-Number \tnoteref{t1}} 
 \tnotetext[t1]{Research is partly funded by NSERC.}

\author[Carleton]{Prosenjit Bose} 

\author[uOttawa]{Jean-Lou De Carufel}

\author[Carleton]{Anil Maheshwari}

\author[Carleton]{Karthik Murali}

\affiliation[Carleton]{organization={School of Computer Science, Carleton University},
            city={Ottawa},
            country={Canada}}

\affiliation[uOttawa]{organization={School of Electrical Engineering and Computer Science, University of Ottawa},
            country={Canada}}

\begin{abstract}
Cops and Robbers is a type of pursuit-evasion game played on a graph where a set of cops try to capture a single robber. The cops first choose their initial vertex positions, and later the robber chooses a vertex. The cops and robbers make their moves in alternate turns: in the cops' turn, every cop can either choose to move to an adjacent vertex or stay on the same vertex, and likewise the robber in his turn. If the cops can capture the robber in a finite number of rounds, the cops win, otherwise the robber wins. The cop-number of a graph is the minimum number of cops required to catch a robber in the graph. It has long been known that graphs embedded on surfaces (such as planar graphs and toroidal graphs) have a small cop-number. Recently, Durocher et al. [Graph Drawing, 2023] investigated the problem of cop-number for the class of $1$-planar graphs, which are graphs that can be embedded in the plane such that each edge is crossed at most once. They showed that unlike planar graphs which require just three cops, 1-planar graphs have an unbounded cop-number. On the positive side, they showed that maximal 1-planar graphs require only three cops by crucially using the fact that the endpoints of every crossing in an embedded maximal 1-planar graph induce a $K_4$. In this paper, we show that the cop-number remains bounded even under the relaxed condition that the endpoints induce at least three edges. More precisely, let an $\times$-crossing of an embedded 1-planar graph be a crossing whose endpoints induce a matching; i.e., there is no edge connecting the endpoints apart from the crossing edges themselves. We show that any 1-planar graph that can be embedded without $\times$-crossings has cop-number at most 21. Moreover, any 1-planar graph that can be embedded with at most $\gamma$ $\times$-crossings has cop-number at most $\gamma + 21$. 
\end{abstract}



\begin{keyword}



Cops and Robbers \sep 1-Planar Graphs \sep $\times$-Crossing

\end{keyword}

\end{frontmatter}



\section{Introduction}\label{sec: intro}

\textit{Pursuit-evasion games} are a class of mathematical problems where a set of pursuers attempt to track down evaders in an environment. They have been intensively studied due to their applications in robotics \cite{application_robotics}, network security \cite{application_network_security}, surveillance \cite{application_surveillance}, etc. \textit{Cops and Robbers} is a type of pursuit-evasion game played on a graph $G = (V,E)$ where a set $\mathcal{U}$ of cops try to capture a single robber. The game proceeds in \textit{rounds}, where each round consists first of the \textit{cops' turn} and then the \textit{robber's turn}. In the first round, the cops place themselves on some vertices of $G$, after which the robber chooses a vertex to place himself. In subsequent rounds, the cops' turn consists of each cop of $\mathcal{U}$ either moving to an adjacent vertex or staying on the same vertex. Likewise, the robber's turn consists of him either moving to an adjacent vertex or staying on the same vertex. At any point in time, more than one cop is allowed to occupy the same vertex. We assume that both the cops and the robber have complete knowledge of the environment and the moves that each player makes in their respective turns. The game terminates when the robber is \textit{captured}; this happens when the robber and some cop of $\mathcal{U}$ are both on the same vertex of $G$. If the robber can be captured in a finite number of rounds, the cops win, otherwise the robber wins.

\subsection{Brief Survey} The study of Cops and Robbers was initiated independently by Quilliot \cite{quilliot1978} and Nowakowski and Winkler \cite{nowakowski_winkler} in the context of a single cop. Aigner and Fromme \cite{AignerFromme} later generalised the setting to multiple cops, and introduced the concept of cop-number. The \textit{cop-number} of a graph $G$, denoted by $c(G)$, is the minimum number of cops required to capture the robber. A graph class $\mathcal{G}$ is \textit{cop-bounded} if the set $\{c(G): G \in \mathcal{G}\}$ is bounded; else $\mathcal{G}$ is \textit{cop-unbounded}. Examples of graph classes that are cop-bounded include chordal graphs \cite{nowakowski_winkler, quilliot1978}, planar graphs \cite{AignerFromme}, graphs of bounded genus \cite{quilliot1985_genus}, $H$-minor-free graphs \cite{Andreae_minor} and $H$-(subgraph)-free graphs \cite{joret}. On the other hand, examples of graph classes that are cop-unbounded include bipartite graphs \cite{bonato_book} and $d$-regular graphs \cite{Andreae_regular}. One of the deepest open problems on cop-number is Meyniel's conjecture which states that $c(G) \in O(\sqrt{n})$ for all graphs $G$. (See \cite{meyniel_survey} for a survey paper on Meyniel's conjecture and the book \cite{bonato_book} by Bonato and Nowakowski for an extensive introduction to Cops and Robbers.)

In this paper, we are interested in the cop-number of embedded graphs. Most results in this area are concerned with the cop-number of graphs with genus $g$. We first discuss planar graphs which have genus $g = 0$. Aigner and Fromme \cite{AignerFromme} proved that 3 cops are sufficient for any planar graph and that there are planar graphs that require 3 cops. Clarke \cite{clarke2002constrained} showed that outerplanar graphs have cop-number two. Bonato et al. \cite{characterisation_outerplanar} showed that an outerplanar graph has cop-number one if and only if it is chordal. However, a classification of which planar graphs have cop-numbers 1, 2 and 3 has not yet been found \cite{Topological_directions}. For graphs with genus $g > 0$, the best known bound on the cop-number is $\frac{4}{3}g + \frac{10}{3}$ \cite{genus_currentbest}. A long-standing conjecture by Schroeder \cite{schroder2001copnumber} is that $c(G) \leq g + 3$ for all values of $g$. By the discussion above, the conjecture holds true for planar graphs. It also holds for toroidal graphs where $g = 1$; in fact the cop-number of a toroidal graph is at most 3 \cite{Lehner21}.

The field of `beyond-planar' graphs has recently garnered significant interest in the graph drawing community (see \cite{didimo_survey} for a survey and \cite{hong2020beyond} for a book on the topic). A typical example is a \textit{$k$-planar graph} which is a graph that can be embedded on the plane such that each edge is crossed at most $k$ times. Recently, Durocher et al. \cite{optimal1plane} initiated the study of cop-number on 1-planar graphs. They show that unlike planar graphs, the class of 1-planar graphs is cop-unbounded. However, they show 3 cops are sufficient and sometimes necessary for a \textit{maximal 1-planar graph}: a 1-planar graph to which no edge can be added without violating 1-planarity. Another result in the same paper concerns \textit{outer-1-planar graphs}, which are graphs that can be drawn on the plane so that all vertices are on the outer-face and each edge is crossed at most once. These graphs have treewidth at most 3 \cite{auer_outer-1-planar}, and when combined with a result of Joret et al. \cite{joret} that the cop-number of a graph with treewidth $tw$ is at most $tw/2 + 1$, we get that the cop-number of outer-1-planar graphs is at most 2. Durocher et al. \cite{optimal1plane} show that an outer-1-planar graph has cop-number 1 if and only if it is chordal. One can extend the notion of outer-1-planarity to outer-$k$-planarity. Durocher et al. noted that an outer-$k$-planar graph has treewidth at most $3k + 11$ \cite{wood_outer-k-planar}, and by the result of Joret et al. \cite{joret}, an outer-$k$-planar graph has cop-number at most $(3k + 13)/2$.

In this paper, we generalise the cop strategy used for planar graphs to a class of 1-planar graphs obtained by enforcing conditions on the subgraph induced by the four endpoints of every crossing. Similar approaches have been taken in the study of other problems, such as Hamiltonicity and connectivity. In \cite{Hamiltonian_cycles}, Fabrici et al. study Hamiltonian paths and cycles in 1-planar graphs where the endpoints of every crossing induce $K_4$ (called locally maximal 1-planar graphs) and $K_4$ or $K_4 \setminus \{e\}$ (called weakly locally maximal 1-planar graphs). In \cite{Masters_thesis}, the equivalence between minimum vertex cuts and shortest separating cycles in planar graphs was extended to 1-planar graphs where the endpoints of every crossing induce a $K_4$, $K_4 \setminus \{e\}$ or $C_4$ (called bowtie 1-planar graphs). In \cite{biedl_murali}, Biedl and Murali introduced the concept of $\times$-crossings in 1-planar graphs: these are crossings where the endpoints do not induce any edge apart from the crossing pair of edges. They show that the vertex connectivity of 1-planar graphs embeddable without $\times$-crossings can be computed in linear time. The concept of $\times$-crossings has even been extended to $k$-planar graphs. In \cite{biedl2024parameterized}, Biedl, Bose and Murali show that the vertex connectivity of $k$-planar graphs embeddable without $\times$-crossings can be computed in $O(n)$ time (when $k$ is a constant).

\subsection{Our Results} To prove that maximal 1-planar graphs have cop-number at most 3, Durocher et al. \cite{optimal1plane} crucially use the fact that in an embedded maximal 1-planar graph, the endpoints of every crossing induce a $K_4$. In this paper, we show that any 1-planar graph that can be embedded on the plane without $\times$-crossings has cop-number at most 21. The cop-number remains bounded even if we allow for a constant number of $\times$-crossings. More precisely, any 1-planar graph that can be embedded with at most $\gamma$ $\times$-crossings has cop-number at most $\gamma + 21$.

Our proof is structured similar to the proof by Bonato and Nowakowski \cite{bonato_book} that planar graphs have cop-number at most 3. However, the presence of crossings brings with it many challenges. One of the main ideas in the proof for planar graphs is that a robber can be trapped in the interior of a cycle formed by the union of two shortest paths. The extension of this to 1-planar graphs poses difficulties because a robber can escape out of a cycle via edge crossings. (As shown in \cite{optimal1plane}, this problem can easily be circumvented if one assumes that the endpoints of every crossing induce a $K_4$.) Therefore, as a first step, we introduce the concept of guarding crossing points of a 1-planar graph $G$, and more generally, guarding subgraphs of $G^\times$, which is the plane graph formed by inserting dummy vertices at crossing points (refer to Section \ref{sec: prelims}). Then, we discuss how a set of five cops can guard the crossing points and vertices of $G$ on a shortest path of a 1-planar graph $G$ embeddable without $\times$-crossings (Section \ref{sec: shortest paths}). This enables us to guard paths and cycles of $G^\times$ in some special configurations, using which we can prove that the cop-number of $G$ is at most 21 (Section \ref{sec: guarding 1-plane graphs without x-crossings}). We conclude by showing that the cop-number remains bounded if we allow for a constant number of $\times$-crossings (Section \ref{sec: conclusion}).


\section{Preliminaries}\label{sec: prelims}

All graphs in this paper are undirected and loopless (although we allow parallel edges). An \textit{embedding} of a graph $G$ on the plane is an assignment of each vertex of $G$ onto a distinct point in $\mathbb{R}^2$ and each edge to a simple non-self-intersecting curve connecting the two vertices incident to the edge without intersecting any other vertices of $G$. We also assume that no two edges touch each other tangentially and that no three edges intersect at a point interior to the three curves. Let $G$ be a graph embedded on the plane. Two edges $e_1, e_2 \in E(G)$ \textit{cross} each other if there is a point on $c$ in $\mathbb{R}^2$ that is common to the interior of the curves representing $e_1$ and $e_2$ in the embedding, and in a sufficiently small disk around $c$, the two curves alternate in order (put simply, the two curves do not touch each other tangentially). The two edges $\{e_1,e_2\}$ form a \textit{crossing} in the embedding, and $c$ is the \textit{crossing point}. 

A graph is \textit{planar} if it has an embedding on the plane such that no two edges cross. Such an embedding is called a \textit{planar embedding}, and a planar graph with a planar embedding is called a \textit{plane graph}. Let $D$ be a planar embedding of a graph $G$. A \textit{face} of a plane graph is a region of $\mathbb{R}^2 \setminus D$. A \textit{1-planar graph} is a graph that can be embedded on the plane such that each edge is crossed at most once. Such an embedding is called a \textit{1-planar embedding}, and a 1-planar graph with a 1-planar embedding is called a \textit{1-plane graph}. Let $\{e_1,e_2\}$ be a crossing in a 1-plane graph $G$. The \textit{endpoints of the crossing} refer to the union of endpoints of $e_1$ and $e_2$. Any 1-plane graph can be drawn such that no two edges incident with the same vertex cross \cite{schaefer2012graph}; hence, we assume that there are four distinct endpoints for every crossing. Two endpoints of the crossing are \textit{opposite} if they are endpoints of the same edge, otherwise they are \textit{consecutive}. An $\times$-crossing of a 1-plane graph is a
crossing in which there is no edge connecting two consecutive endpoints of the crossing. We use the notation $\hat{\mathcal{G}}$ to denote the set of all 1-planar graphs that can be embedded without $\times$-crossings. As a matter of convention, whenever we talk of a graph $G \in \hat{\mathcal{G}}$, we assume that we are given a 1-planar embedding of $G$ without $\times$-crossings.

Let $G$ be a 1-plane graph. The \textit{planarisation} of $G$ is the plane graph obtained by inserting a \textit{dummy vertex} at every point where a pair of edges of $G$ cross each other. (In this paper, we will use the terms crossing points and dummy vertices synonymously.) For a subgraph $H \subseteq G$, we use the notation $H^\times$ to denote the sub-drawing of $H$ in $G^\times$. More precisely, $H^\times$ is the subgraph of $G^\times$ obtained by inserting a dummy vertex wherever an edge of $H$ is crossed by an edge of $G$ (not necessarily an edge of $H$). For example, if $e = (u,v)$ is an edge of $G$, then $e^\times = e$, if $e$ is uncrossed, else $e^\times$ is a path $(u,d,v)$ where $d$ is the crossing point at which $e$ is crossed. Note that the planarisation of $G$ is the same as the graph $G^\times$, but this may not be true for arbitrary subgraphs of $G$. In any graph $H^\times$, where $H \subseteq G$, a vertex $v \in V(H^\times)$ will be called a \textit{$G$-vertex} if $v \in V(G)$.

\subsection{Cops and Robbers on Embedded Graphs}

In Cops and Robbers, each player either moves from one vertex to an adjacent vertex of the graph or stays on the same vertex of the graph. Therefore, the presence of parallel edges does not affect the cop-number of the graph. We make use of this fact in the following way. Let $\{(u,v), (w,x)\}$ be a crossing in a graph $G \in \hat{\mathcal{G}}$. Let $c$ denote the crossing point in $G^\times$. Up to renaming, let $(u,w)$ be an edge connecting two consecutive endpoints of the crossing. If the three vertices $\{c,u,w\}$ do not bound a face of $G^\times$, then we add a parallel uncrossed edge connecting $u$ and $w$ so that $\{c,u,w\}$ now bounds a face. (This can always be done by drawing the edge close to the crossing point.) As adding parallel edges does not affect the cop-number, we may assume that we can always embed a graph $G \in \hat{\mathcal{G}}$ such that every crossing point and two consecutive endpoints of the crossing bound a face of $G^\times$. The uncrossed edge that connects the two consecutive endpoints will be called a \textit{kite edge at the crossing} (Figure \ref{fig: adding kite edges}).

\begin{figure}
    \centering
    \includegraphics[scale = 0.75, page = 1]{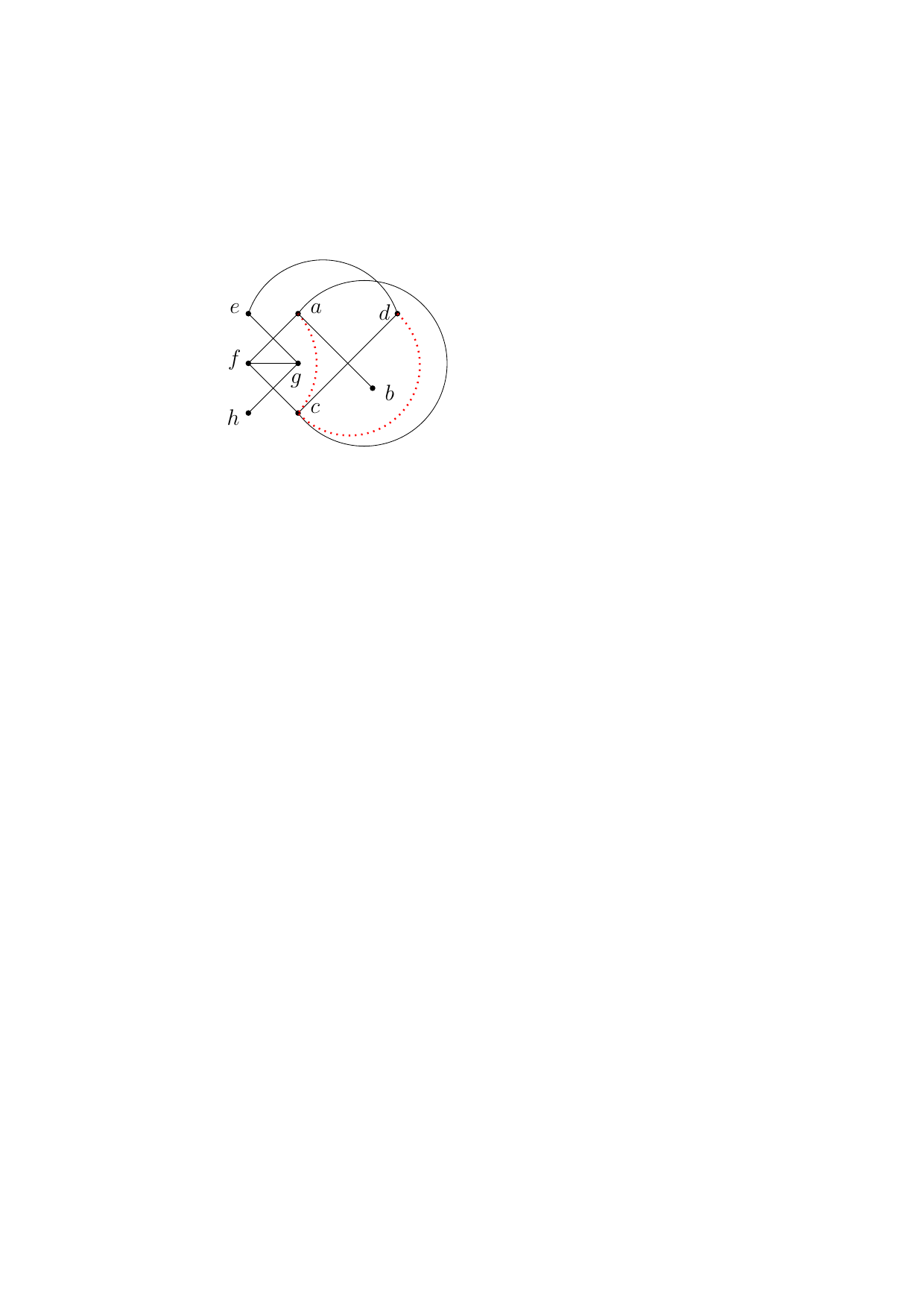}
    \caption{An example that illustrates the insertion of kite edges in a graph $G \in \hat{\mathcal{G}}$. The edges $(a,c)$ and $(c,d)$ are duplicated and inserted as uncrossed kite edges at the crossings $\{(a,b), (c,d)\}$ and $\{(a,c), (e,d)\}$ respectively. The kite edge $(f,g)$ is common to both the crossings $\{(e,g), (a,f)\}$ and $\{(g,h), (f,c)\}$.}
    \label{fig: adding kite edges}
\end{figure}

\paragraph{Guarding a subgraph of $G^\times$} Cops and robbers play on a graph $G$ by only using information about the adjacency of vertices, oblivious to how $G$ may be embedded on the plane. However, since our main tool to show that $\hat{\mathcal{G}}$ is cop-bounded relies on a specific embedding of the graph, we introduce the notion of guarding a subgraph of $G^\times$. We do this by first defining what we mean by guarding $G$-vertices and dummy vertices of $G^\times$. Let $G$ be a 1-plane graph and $v$ be a vertex of $G^\times$. If $v$ is a $G$-vertex, then $v$ is said to be \textit{guarded} if the robber cannot land on $v$ (because he gets captured before visiting $v$) or he is captured immediately after landing on $v$. (In the latter scenario, there is a cop already on $v$ or on a vertex adjacent to $v$.) If $v$ is a dummy vertex, then $v$ is said to be \textit{guarded} if the robber cannot move from any endpoint of the crossing to the opposite endpoint, or is captured immediately after he makes this move. (The robber may however be able to move between two consecutive endpoints of the crossing.) A graph $H \subseteq G^\times$ is said to be \textit{guarded} if all $G$-vertices and dummy vertices of $H$ are guarded. Observation \ref{obs: guarding a crossing point} shows a simple but useful way to guard a crossing point of $G$.

\begin{observation}\label{obs: guarding a crossing point}
    If two consecutive endpoints of a crossing are guarded, then the crossing point is guarded.
\end{observation}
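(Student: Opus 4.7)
The plan is to unpack the two definitions (guarding a $G$-vertex and guarding a dummy vertex) and verify directly that the former, applied at two consecutive endpoints of the crossing, implies the latter at the crossing point. First I would fix notation: write the crossing as $\{(u,v),(w,x)\}$ with crossing point $c$, and, without loss of generality, let $u$ and $w$ be the two guarded consecutive endpoints, so that the two pairs of opposite endpoints are $\{u,v\}$ and $\{w,x\}$. According to the definition, to show that $c$ is guarded I must rule out (up to immediate capture) all four traversals through $c$ between opposite endpoints: $u \to c \to v$, $v \to c \to u$, $w \to c \to x$, and $x \to c \to w$.

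Next I would dispatch the four traversals using the guarding condition at $u$ and $w$. For the pair $\{u,v\}$: the traversal $v \to c \to u$ ends with the robber on the guarded $G$-vertex $u$, so he is captured immediately upon arrival, which is exactly what the dummy-vertex guarding condition requires; the traversal $u \to c \to v$ would require the robber to already occupy $u$ one turn earlier, which is impossible since $u$ is guarded (he either never reaches $u$ or is already caught there). The pair $\{w,x\}$ is handled in exactly the same way, using the hypothesis that $w$ is guarded. Since every traversal between opposite endpoints through $c$ is thus either impossible or ends in immediate capture, $c$ satisfies the definition of a guarded dummy vertex.

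I do not anticipate a real technical obstacle: the statement is essentially a bookkeeping translation between the two guarding definitions. The only subtlety worth being careful about is the asymmetry in the $G$-vertex definition (capture may occur either before the robber reaches the vertex or immediately upon landing), which is why both directions through $c$ at each of $u$ and $w$ have to be argued; both sub-cases nevertheless collapse to the same conclusion, so the argument remains short and elementary.
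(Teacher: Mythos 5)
The paper states this observation without proof, treating it as immediate from the two guarding definitions; your argument is exactly the intended unpacking and is correct. In particular, you correctly handle both directions through the crossing point at each guarded consecutive endpoint, noting that a traversal starting at a guarded endpoint cannot occur because the robber is captured there (cops move first each round) before he can continue to the opposite endpoint.
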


\paragraph{Guarding in a subgraph of $G$} When we say that a subgraph $H \subseteq G^\times$ is guarded, we implicitly assume that $H$ is guarded no matter how the robber moves in the whole graph $G$. That is, the robber is free to move from any vertex to any other adjacent vertex of $G$, and no matter how he moves, $H$ remains guarded. Sometimes, we may also \textit{guard $H$ in a subgraph $I$ of $G$}. This means that $H$ is guarded so long as the robber is confined to $I$. (The cops need not be confined to $I$ however.) When the robber moves out of $I$ by taking an edge that does not belong to $E(I)$, then $H$ ceases to be guarded. Henceforth, whenever we say that a subgraph $H$ of $G^\times$ is guarded, we will always mean that $H$ is guarded in $G$ (i.e., the movement of the robber is not restricted), whereas, when we explicitly say that $H$ is guarded in a subgraph $I \subseteq G$, then $H$ is guarded so long as the robber is restricted to $I$.        

\medskip
When a subgraph $H \subseteq G^\times$ is guarded, the cops prevent the robber from using any edge $e$ of $G$ where $e^\times$ intersects $H$. This motivates the following definition of cop territory.

\begin{figure}
    \centering
    \includegraphics[scale = 0.75, page = 2]{Adding_kite_edge.pdf}
    \caption{An illustration of cop and robber territory for the graph from Figure \protect\ref{fig: adding kite edges}. Let $H$ be the path $(a,c,z,b)$ of $G^\times$ that is guarded. The cop territory $\mathcal{C}(H)$ consists of edges $\{(a,b), (c,d), (f,a), (f,c), e_1, e_2\}$. The robber is on the vertex $f$ and the robber territory $\mathcal{R}(H)$ consists of the edges $\{(g,e), (g,f), (g,h), (e,d)\}$.}
    \label{fig: cop and robber territory}
\end{figure}

\begin{definition}[Cop Territory]\label{def: cop territory}
    Let $H$ be a subgraph of $G^\times$ guarded by a set $\mathcal{U}$ of cops. Then the \emph{cop territory}, denoted by $\mathcal{C}(H)$, is the subgraph of $G$ induced by all edges $e \in E(G)$ such that $e^\times \cap H \neq \emptyset$.
\end{definition}

Having defined cop territory, we will now define robber territory, which, informally speaking, is the subgraph of $G$ within which the robber can move without being captured. An important distinction between the two is that while cop territory was defined by a set of \textit{edges} prohibited for the robber, the robber territory will be defined by a set of \textit{vertices} that the robber can visit without being captured. Therefore, in Definition \ref{def: robber territory}, we define the robber territory by excluding edges of the cop territory, and not its vertices.

\begin{definition}[Robber Territory]\label{def: robber territory}
    Let $H$ be a subgraph of $G^\times$ guarded by a set $\mathcal{U}$ of cops. Then the \emph{robber territory}, denoted by $\mathcal{R}(H)$, is the connected component of $G \setminus E(\mathcal{C}(H))$ that contains the robber.
\end{definition}

(See Figure \ref{fig: cop and robber territory} for an illustration of cop territory and robber territory on the graph from Figure \ref{fig: adding kite edges}. All figures in the paper represent $G$-vertices with filled disks and dummy vertices with hollow squares.) The presence of kite edges at crossings always ensures that a robber territory has no $\times$-crossings, as Observation \ref{obs: R is 1-plane without x-crossings} shows.

\begin{observation}\label{obs: R is 1-plane without x-crossings}
    Let $H$ be a guarded subgraph of $G^\times$ for some $G \in \hat{\mathcal{G}}$. Then $\mathcal{R}(H)\in \hat{\mathcal{G}}$.
\end{observation}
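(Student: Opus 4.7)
The plan is to show that $\mathcal{R}(H)$, with the embedding inherited from the given embedding of $G$, is a 1-plane graph without $\times$-crossings. Since $\mathcal{R}(H)$ is a subgraph of $G$, it inherits a 1-planar embedding; only the absence of $\times$-crossings needs to be verified. So the task reduces to: for every crossing present in $\mathcal{R}(H)$, exhibit a kite edge at that crossing that also lies in $\mathcal{R}(H)$.

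Let $\chi = \{(u,v),(w,x)\}$ be any crossing in $\mathcal{R}(H)$, with crossing point $c$ in $G^\times$. First I would observe that $c \notin V(H)$: otherwise both $(u,v)^\times$ and $(w,x)^\times$ would contain $c$, so both edges would lie in $\mathcal{C}(H)$ and hence would be excluded from $\mathcal{R}(H)$, contradicting the presence of the crossing. Next, since neither $(u,v)$ nor $(w,x)$ is in $\mathcal{C}(H)$, and $(u,v)^\times$ has vertex set $\{u,c,v\}$ (similarly for $(w,x)$), none of $u,v,w,x$ lie in $V(H)$ either.

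By the kite-edge convention established earlier in Section~\ref{sec: prelims}, there is an uncrossed edge in $G$ between some pair of consecutive endpoints of $\chi$; say $(u,w)$ is such a kite edge. Because $(u,w)$ is uncrossed, $(u,w)^\times = (u,w)$, so $(u,w)^\times \cap H = \emptyset$ follows from $u,w \notin V(H)$. Thus $(u,w) \notin E(\mathcal{C}(H))$. Since $(u,v) \in E(\mathcal{R}(H))$ places $u$ in the connected component of $G \setminus E(\mathcal{C}(H))$ containing the robber, and the edge $(u,w)$ is incident to $u$ and is not in $E(\mathcal{C}(H))$, we get $(u,w) \in E(\mathcal{R}(H))$.

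Consequently, the crossing $\chi$, viewed in the inherited embedding of $\mathcal{R}(H)$, has a kite edge connecting two consecutive endpoints in $\mathcal{R}(H)$ itself, so $\chi$ is not an $\times$-crossing. As $\chi$ was arbitrary, $\mathcal{R}(H)$ admits a 1-planar embedding without $\times$-crossings, i.e.\ $\mathcal{R}(H) \in \hat{\mathcal{G}}$. The only potential subtlety — which I expect to be the main point worth handling carefully — is confirming that the crossing point $c$ is excluded from $V(H)$, since without this the argument that $u,v,w,x \notin V(H)$ does not immediately give that $(u,v)$ and $(w,x)$ escape $\mathcal{C}(H)$; the observation above resolves this cleanly.
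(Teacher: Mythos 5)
Your proof is correct and follows essentially the same route as the paper's: the endpoints of any crossing surviving into $\mathcal{R}(H)$ cannot lie in $H$, so the (uncrossed) kite edge guaranteed by the convention of Section~\ref{sec: prelims} escapes $\mathcal{C}(H)$ and lands in the robber's component. The extra care you take with the crossing point $c$ is harmless but not needed beyond what the paper already states.
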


\begin{proof}
    For every crossing in $\mathcal{R}(H)$, there is a kite edge at the crossing that belongs to $E(G)$. None of the four endpoints of such a crossing can belong to $H$, for the edges of the crossing belong to the robber territory. Therefore, the kite edge at the crossing, which is uncrossed by definition, belongs to $E(\mathcal{R})$. This implies the stated observation. 
\end{proof}

\begin{figure}
    \centering
    \includegraphics[scale = 0.75]{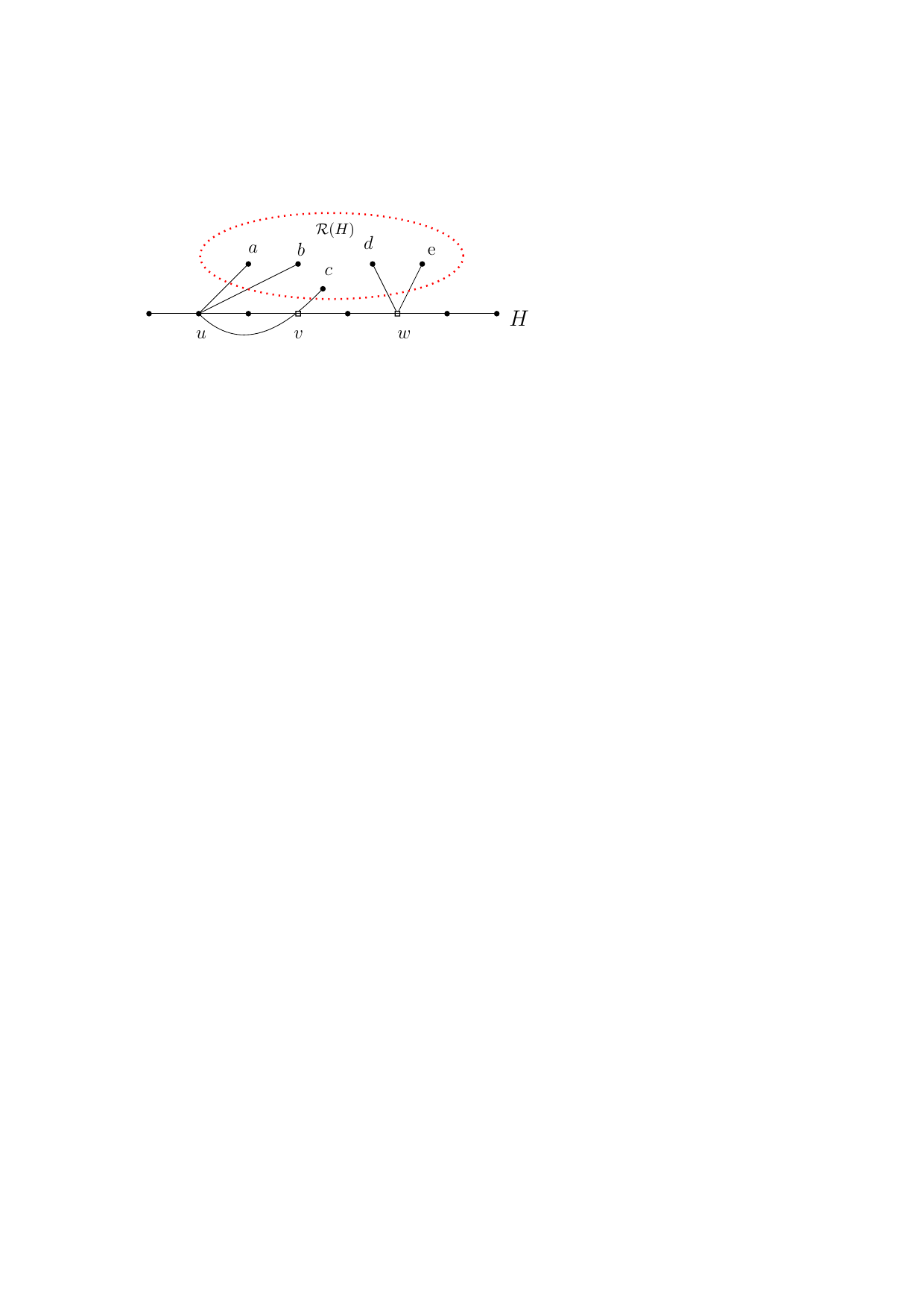}
    \caption{An example to illustrate `adjacency' to the robber territory. Here, $H$ is a path, and $S_H(u) = \{(u,a), (u,b)\}$ (even though $(u,c) \in E(G)$), $S_H(v) = \{(v,c)\}$ and $S_H(w) = \{(w,d), (w,e)\}$.} 
    \label{fig: adjacency}
\end{figure}

\paragraph{Adjacency to the robber territory} Let $H$ be a subgraph of $G^\times$ that is guarded. Since $H$ contains both $G$-vertices and dummy vertices, it is not straightforward to define when a vertex of $H$ is adjacent to a vertex of the robber territory, which is a subgraph of $G$. Therefore, we formalise a notion of adjacency below. For a $G$-vertex $v \in H$, let $S_H(v)$ be the set of all edges $e$ of $G$ incident with $v$ and an other vertex of $\mathcal{R}(H)$ such that $e^\times \cap H = \{v\}$.
For a dummy vertex $v \in H$, let $S_H(v)$ be the set of all edges $e$ of $G^\times$ incident with $v$ and an other vertex of $\mathcal{R}(H)$. For both cases above, we say that $v$ is \textit{adjacent} to $\mathcal{R}(H)$ or $v$ has a \textit{neighbour} in $\mathcal{R}(H)$ if $S_H(v) \neq \emptyset$. (See Figure \ref{fig: adjacency} for an illustration.) Below, we make an observation that will be useful in Section \ref{sec: guarding 1-plane graphs without x-crossings}. 

\begin{observation}\label{obs: adding a vertex to a 1-plane graph with no x-crossings}
     Let $H$ be a guarded subgraph of $G^\times$ for some $G \in \hat{\mathcal{G}}$. Let $\mathcal{R} := \mathcal{R}(H)$ be the robber territory. For any $v \in V(H)$, the graph $I := \mathcal{R} \cup S_H(v)$ belongs to $\hat{\mathcal{G}}$.
\end{observation}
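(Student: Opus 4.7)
The plan is to exhibit a 1-planar embedding of $I$ without $\times$-crossings by inheriting the embedding of $G$ (augmented with the dummy vertex $v$ when $v \notin V(G)$). Since each edge of $\mathcal{R}$ has at most one crossing in $G$, and each edge of $S_H(v)$ is either an edge of $G$ (crossed at most once, when $v$ is a $G$-vertex) or a half-edge in $G^\times$ (uncrossed, when $v$ is a dummy vertex), the inherited embedding is 1-planar. The real task is to rule out $\times$-crossings.

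Before splitting into cases, I would record one fact implicit in the proof of Observation~\ref{obs: R is 1-plane without x-crossings}: no $G$-vertex of $V(H)$ lies in $V(\mathcal{R})$. Indeed, every edge $e$ of $G$ incident with such a vertex $w \in V(H)$ satisfies $w \in e^\times \cap H$, so $e \in E(\mathcal{C}(H))$; thus $w$ is isolated in $G \setminus E(\mathcal{C}(H))$ and is excluded from the robber's component.

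In the first case, $v$ is a dummy vertex. Both edges of $G$ crossing at $v$ then lie in $\mathcal{C}(H)$, so neither is in $\mathcal{R}$. Consequently no edge of $\mathcal{R}$ passes through $v$, and every half-edge of $S_H(v)$, being uncrossed in $G^\times$, picks up no new crossing with $\mathcal{R}$. So each crossing in $I$ is already a crossing in $\mathcal{R}$, and Observation~\ref{obs: R is 1-plane without x-crossings} supplies a kite edge in $\mathcal{R} \subseteq I$ at that crossing.

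The second case, with $v$ a $G$-vertex, is the main one, and the crux will be tracking which endpoints of a given crossing sit in $V(H)$ versus $V(\mathcal{R})$. Fix a crossing $\{e_1,e_2\}$ in $I$. Two edges sharing the vertex $v$ cannot cross, so at most one of $e_1, e_2$ lies in $S_H(v)$; if both lie in $E(\mathcal{R})$, Observation~\ref{obs: R is 1-plane without x-crossings} again gives the needed kite edge. Otherwise write $e_1 = (v,u) \in S_H(v)$ and $e_2 = (w,x) \in E(\mathcal{R})$, and pick a kite edge $k$ of $G$ at this crossing, which exists since $G \in \hat{\mathcal{G}}$. If $v$ is an endpoint of $k$, say $k = (v,y)$ with $y \in \{w,x\} \subseteq V(\mathcal{R})$, then $k$ is uncrossed and $k^\times \cap H = \{v\}$ since $y \notin V(H)$, placing $k \in S_H(v) \subseteq E(I)$. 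Otherwise $k = (u,y)$ with both endpoints in $V(\mathcal{R}) \setminus V(H)$, giving $k^\times \cap H = \emptyset$ and placing $k$ in the robber's component of $G \setminus E(\mathcal{C}(H))$, hence in $E(\mathcal{R}) \subseteq E(I)$. Either way $k \in E(I)$ connects consecutive endpoints of the crossing, so the crossing is not a $\times$-crossing.
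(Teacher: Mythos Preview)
Your proof is correct and follows essentially the same approach as the paper's: fix a crossing of $I$, locate a kite edge of $G$ at that crossing, and argue it lies in $E(I)$ via the dichotomy of whether or not the kite edge is incident with $v$. Your version is in fact more careful than the paper's, since you explicitly treat the case where $v$ is a dummy vertex (where all crossings of $I$ already lie in $\mathcal{R}$) and explicitly record the auxiliary fact that no $G$-vertex of $H$ lies in $V(\mathcal{R})$, which is what justifies $k^\times \cap H = \{v\}$ in the case $k=(v,y)$; the paper's proof uses both of these implicitly.
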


\begin{proof}
   Consider any crossing of $I$. Since $G$ has no $\times$-crossings, there is a kite edge $e \in E(G)$ at this crossing. We will show that $e \in E(I)$, and therefore the crossing is not an $\times$-crossing in $I$. First, consider the case when $e$ is incident with $v$. Since all kite edges are uncrossed,  $e^\times \cap H = \{v\}$, and therefore $e \in S_H(v)$. This implies that $e \in E(I)$. On the other hand, if $e$ is not incident with $v$, then $e$ connects two vertices of $\mathcal{R}$. Again, since $e$ is uncrossed, we have $e \in E(I)$. Therefore, $I$ is a 1-plane graph without $\times$-crossings. 
\end{proof}


\section{Guarding Shortest Paths in 1-Planar Graphs}\label{sec: shortest paths}

One of the most useful results in Cops and Robbers is that a shortest path of a graph can be guarded by a single cop (after some finite number of initial rounds). This result, first obtained by Aigner and Fromme \cite{AignerFromme}, was used to show that the cop-number of planar graphs is at most three. One may wonder whether the same strategy could be used on the planarisation of a 1-plane graph $G$. Taking this approach however quickly lands us into many difficulties. For one, a shortest path in $G^\times$ between two $G$-vertices may not correspond to a shortest path in $G$. This happens because a shortest path in $G^\times$ may use a crossing point to move from one endpoint of the crossing to a consecutive endpoint. Clearly, such a move is not permissible for a cop. Furthermore, even if one assumes that $G \in \hat{\mathcal{G}}$, a cop must at least use the kite edge at the crossing to move between consecutive endpoints of the crossing. This adds a lag to the speed of the cop with respect to the robber, eventually rendering the path unguarded. 

To avoid running into the problems above, we take a different approach. We start with a shortest path $P$ in $G$, and show that if $G \in \hat{\mathcal{G}}$, then $P^\times$ can be guarded by a set of five cops. This result will later be used in Section \ref{sec: guarding 1-plane graphs without x-crossings} to show that $\hat{\mathcal{G}}$ is cop-bounded.

Let $\mathcal{U}$ be a set of cops. A strategy that we use to guard $P^\times$ is to make a single cop of $\mathcal{U}$ guard all $G$-vertices of $P^\times$, while on the other hand, all cops of $\mathcal{U}$ co-ordinate in guarding the dummy vertices of $P^\times$. This strategy will be used again in Section \ref{sec: guarding 1-plane graphs without x-crossings} to guard paths and cycles of $G^\times$. In light of this, we define the following notation: for any set of cops $\mathcal{U}$, let $\mathcal{U}^*$ denote a single `special' cop of $\mathcal{U}$.

\begin{lemma}\label{lem: five cops guard a path}
Let $G$ be a 1-plane graph and $P$ be a shortest path of $G$. Let $Q \subseteq V(P^\times)$ be the set of all $G$-vertices of $P^\times$ and all dummy vertices of $P^\times$ that are not part of $\times$-crossings in $G$. Then a set $\mathcal{U}$ of five cops can guard $Q$ such that $\mathcal{U}^*$ guards all $G$-vertices of $Q$. 
\end{lemma}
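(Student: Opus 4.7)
I would adapt Aigner--Fromme's classical single-cop shortest-path guarding strategy to the 1-planar setting. The special cop $\mathcal{U}^*$ will guard the $G$-vertices of $P$ on its own, and the four auxiliary cops $\mathcal{U}_1,\dots,\mathcal{U}_4$ will cooperate with $\mathcal{U}^*$ to guard the non-$\times$-crossing dummies via Observation~\ref{obs: guarding a crossing point}. The key structural fact is that every non-$\times$-crossing on $P$ has a kite edge between a vertex of $P$ (the \emph{kite anchor}) and a non-$P$ endpoint of the crossing, reachable from the anchor by a single uncrossed step.

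The plan proceeds in three steps. First, I would have $\mathcal{U}^*$ execute the Aigner--Fromme shadow strategy on $P$, so that after a finite initialization $\mathcal{U}^*$ stays on $P$ and maintains the invariant $d_G(\mathcal{U}^*, v_i) \leq d_G(r, v_i)$ for every $v_i \in V(P)$, where $r$ denotes the robber's current position; this immediately guards every $G$-vertex of $P^\times$. Second, for each non-$\times$-crossing dummy $d$ on $P^\times$ with $P$-edge $(v_i, v_{i+1})$ and, without loss of generality, kite edge $(v_i, w)$, I would apply Observation~\ref{obs: guarding a crossing point}: it suffices to guard the consecutive pair $(v_i, w)$. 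The vertex $v_i$ is already guarded by $\mathcal{U}^*$, and because the kite edge is uncrossed in $G$, any cop standing exactly at $v_i$ can respond to a robber move onto $w$ by following the kite edge, so $w$ becomes guarded whenever \emph{some} cop sits at $v_i$. Third, I would use the four auxiliary cops to maintain a sliding-window invariant: the five cops occupy the consecutive positions $v_{c-2}, v_{c-1}, v_c, v_{c+1}, v_{c+2}$ of $P$, with $\mathcal{U}^*$ at $v_c$. This is preserved because cops move at unit speed, so whenever $\mathcal{U}^*$ shifts one step along $P$, each auxiliary cop mirrors that shift. Combining the two invariants: whenever the robber is at, or has just arrived at, a crossing endpoint $w$ or $x$, we have $d_G(r, v_i) \leq 2$ via the kite edge, so $|c - i| \leq 2$ by the Aigner--Fromme invariant, placing $v_i$ inside the window; some cop is therefore on $v_i$, and by Observation~\ref{obs: guarding a crossing point} the dummy is guarded.

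The main obstacle is the behaviour of Aigner--Fromme's shadow $\pi(r)$, which can jump several vertices along $P$ in a single robber move (for example, when the robber uses a long-range edge to reach a vertex adjacent to a distant part of $P$); this seems to threaten the sliding-window invariant, since the window would have to catch up instantly. I would resolve this by keying the window to $\mathcal{U}^*$'s actual position $v_c$ rather than to $\pi(r)$: $v_c$ shifts by at most one per round by definition, so the four auxiliary cops can maintain the window by synchronous mirroring, and all ``jumping'' behaviour of $\pi(r)$ is absorbed entirely inside the Aigner--Fromme argument for $\mathcal{U}^*$, which I would invoke as a black box. A minor secondary issue is the boundary behaviour when $c$ is within $2$ of either endpoint of $P$; there the window extends past $V(P)$, and the auxiliary cops simply double up on the remaining vertices without affecting the argument.
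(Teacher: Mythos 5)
Your proposal is correct and matches the paper's proof in all essentials: $\mathcal{U}^*$ runs the Aigner--Fromme strategy on $P$, the four auxiliary cops occupy the $P$-vertices within distance two of $\mathcal{U}^*$, and the kite edge at each non-$\times$-crossing places the robber within distance at most two of a fixed $P$-vertex whenever he occupies an endpoint of that crossing, so that $P$-vertex lies inside the window and the crossing point is guarded via Observation~\ref{obs: guarding a crossing point}. The only cosmetic difference is that the paper fixes the robber's direction of traversal across the crossing and deduces by contradiction which endpoint the kite edge must touch, whereas you fix the kite edge up to relabelling and verify both traversal directions.
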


\begin{proof}
(See Figure \ref{fig: shortest path} for an illustration of Lemma \ref{lem: five cops guard a path}.) Since $P$ is a shortest path of $G$, all vertices of $P$ can be guarded by a single cop \cite{AignerFromme}. Hence, we can set up $\mathcal{U}$ such that at all times after a finite number of rounds $\alpha \geq 0$:

      \begin{enumerate}
        \item $\mathcal{U}^*$ guards all vertices of $P$.

        \item All vertices of $P$ within distance two of $\mathcal{U}^*$ are occupied by the remaining four cops.

        \item The robber is captured when he moves to a vertex adjacent to any of the five cops. 
    \end{enumerate}

We will show that $Q$ is guarded in all rounds starting from $\alpha + 1$. Trivially, all $G$-vertices of $Q$ are guarded by $\mathcal{U}^*$. Hence, we only need to show that the crossing points of $Q$ are guarded. For this, we will repeatedly use Observation \ref{obs: distance to path robber and primary cop}, which follows because $\mathcal{U}^*$ guards all vertices of $P$.

\begin{observation}\label{obs: distance to path robber and primary cop}
 At the end of the cops' turn of some round $\alpha' \geq \alpha$, if the distance of the robber to a vertex $v \in P$ is at most $k$, then the distance of $\mathcal{U}^*$ from $v$ is also at most $k$. 
\end{observation}

\begin{figure}
    \centering
    \includegraphics[scale = 0.75]{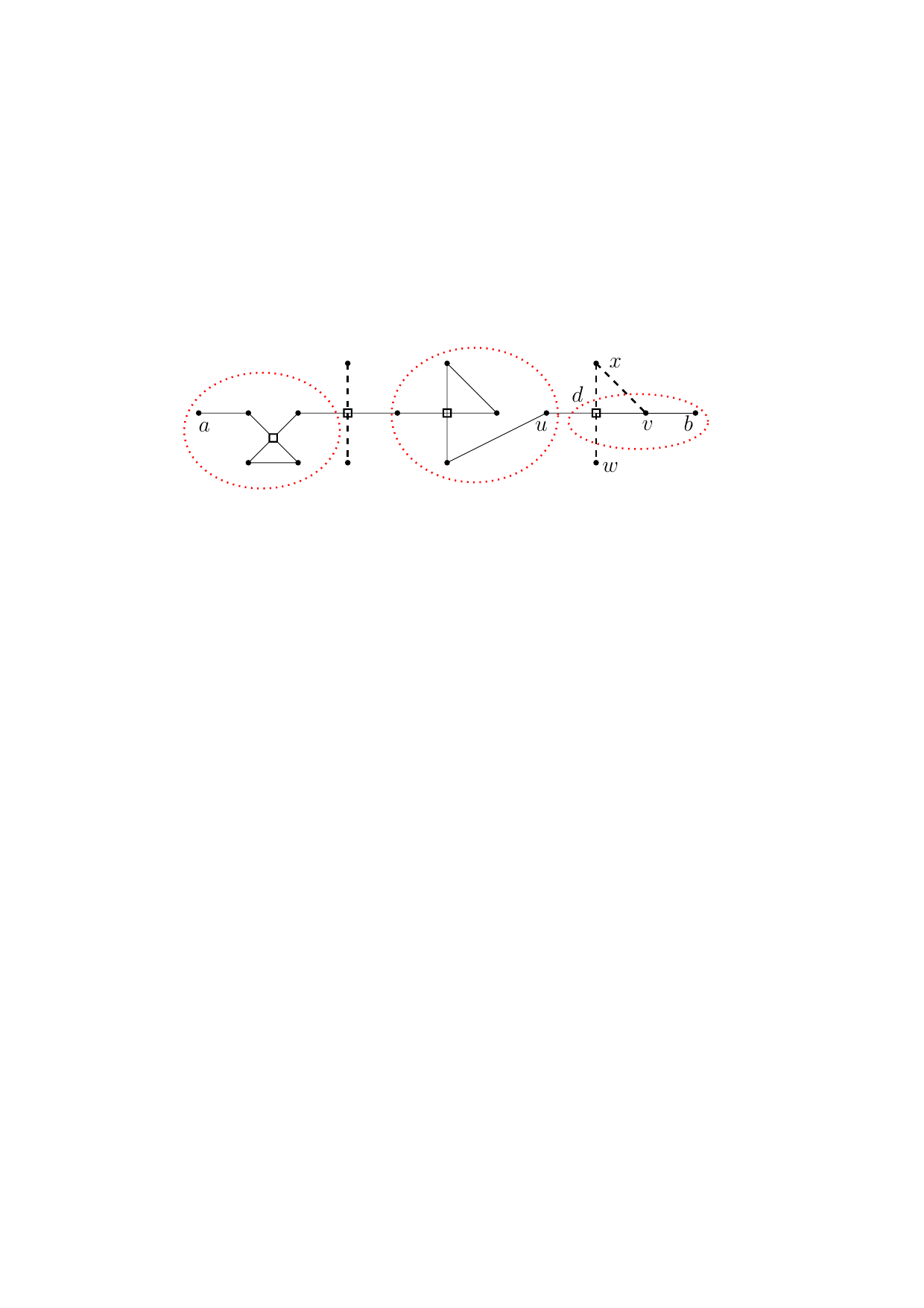}
    \caption{An illustration for Lemma \protect\ref{lem: five cops guard a path}. The solid edges form a shortest $(a,b)$-path $P$. The dashed edges belong to $G$, but not to $P$. The dotted region shows vertices of $P$ that belong to $Q$.}
    \label{fig: shortest path}
\end{figure}

Let $d$ be a dummy vertex on $Q$ resulting from an edge $(u,v)$ of $P$ being crossed by an edge of $G$, say $(w,x)$. We will show that if the robber moves from $w$ to $x$ in some round occurring after $\alpha$, then he is captured. In other words, if at the end of some round $\alpha' \geq \alpha$, the robber is on $w$, and at the end of round $\alpha'+1$, if he moved to $x$, then he is captured. We will first show that neither $(w,u)$ and $(w,v)$ can belong to $E(G)$. Suppose otherwise, for contradiction, and that $(w,u) \in E(G)$. Then, at the end of cops' turn of round $\alpha' + 1$, the cop $\mathcal{U}^*$ is on some vertex of $P$ that is within distance 1 from $u$ (Observation \ref{obs: distance to path robber and primary cop}). However, this implies that before the cops played their turn, there was some cop of $\mathcal{U}$ already on $u$ because starting from round $\alpha$, all vertices within distance two of $\mathcal{U}^*$ are occupied by some cop of $\mathcal{U}$. This means that the cop at $u$ must have captured the robber. This contradicts the fact that the robber's turn was played in round $\alpha' + 1$. Hence, $(w,u) \notin E(G)$ and $(w,v) \notin E(G)$. Since $d \in Q$, the crossing $\{(u,v), (w,x)\}$ has a kite edge, and hence at least one of $(x,u)$ or $(x,v)$ is in $E(G)$; by symmetry, we assume $(x,u) \in E(G)$. At the end of cops' turn of round $\alpha' + 1$, the cop $\mathcal{U}^*$ is on some vertex of $P$ within distance two from $u$ (Observation \ref{obs: distance to path robber and primary cop}). This implies that there is some cop on $u$ at the end of cops' turn. Thus, the robber is captured after it moves from $w$ to $x$ by the cop at $u$ in the round $\alpha'+2$.    
\end{proof}

Corollary \ref{cor: main corollary of shortest path} is a straightforward consequence of Lemma \ref{lem: five cops guard a path}.

\begin{corollary}\label{cor: main corollary of shortest path}
    Let $P$ be a shortest path of a graph $G \in \hat{\mathcal{G}}$. Then a set $\mathcal{U}$ of five cops can guard $P^\times$ such that $\mathcal{U}^*$ guards all $G$-vertices of $P^\times$.
\end{corollary}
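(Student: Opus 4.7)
My plan is to derive this corollary as an immediate specialization of Lemma \ref{lem: five cops guard a path}. The lemma provides a five-cop guarding strategy for the subset $Q \subseteq V(P^\times)$ consisting of every $G$-vertex on $P^\times$ together with every dummy vertex on $P^\times$ that does not belong to an $\times$-crossing of $G$. The key observation I would make is that the hypothesis $G \in \hat{\mathcal{G}}$ means (by the convention fixed in Section \ref{sec: prelims}) that we are working with a 1-planar embedding of $G$ in which no crossing is an $\times$-crossing. Consequently, every dummy vertex that appears along $P^\times$ automatically satisfies the side condition appearing in the definition of $Q$.

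From this, I would conclude that the set $Q$ in the statement of Lemma \ref{lem: five cops guard a path} actually equals the full vertex set $V(P^\times)$. Applying the lemma then yields a set $\mathcal{U}$ of five cops that guards $V(P^\times)$ in such a way that the special cop $\mathcal{U}^*$ guards all $G$-vertices of $P^\times$. Since a subgraph $H$ of $G^\times$ is guarded exactly when all of its $G$-vertices and all of its dummy vertices are guarded, guarding the whole vertex set $V(P^\times)$ is tantamount to guarding the subgraph $P^\times$, which is precisely what the corollary asserts.

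I do not expect any real obstacle here: all of the substantive work already lives inside Lemma \ref{lem: five cops guard a path} (the shadowing strategy of Aigner--Fromme augmented by the four auxiliary cops that occupy all path-vertices within distance two of $\mathcal{U}^*$, together with the case analysis using kite edges at non-$\times$-crossings). The only verification needed for the corollary is the one-line observation that $G \in \hat{\mathcal{G}}$ is exactly the hypothesis that removes the $\times$-crossing side condition from the lemma, making $Q = V(P^\times)$.
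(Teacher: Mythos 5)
Your derivation is exactly the one the paper intends: since $G \in \hat{\mathcal{G}}$ is embedded without $\times$-crossings, the set $Q$ of Lemma \ref{lem: five cops guard a path} is all of $V(P^\times)$, and the corollary follows immediately. The paper itself states this as a straightforward consequence of the lemma, so your proof is correct and takes the same route.
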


With Corollary \ref{cor: main corollary of shortest path}, one may try to deduce, using the result by Aigner and Fromme for planar graphs \cite{AignerFromme}, that the cop number of a graph $G \in \hat{\mathcal{G}}$ is at most $3 \times 5 = 15$. This however fails because an essential strategy used to bound the cop-number for planar graphs is that a robber can be trapped on one side (either inside or outside) of the cycle formed by the union of two internally vertex disjoint shortest paths. One cannot extend this straight-forwardly to 1-planar graphs because a pair of vertex disjoint paths may cross each other repeatedly. With this, the notion of what constitutes inside and outside becomes non-trivial. In Section \ref{sec: guarding 1-plane graphs without x-crossings}, we give an alternative approach---although at a high-level, consisting of the same ideas as for planar graphs---to arrive at a bound on the cop number of a graph $G \in \hat{\mathcal{G}}$. 

Corollaries \ref{cor: stationary cops on shortest paths} and \ref{cor: five cops guard a path with x-crossings} are further implications of Lemma \ref{lem: five cops guard a path} that will be used in Section \ref{sec: guarding 1-plane graphs without x-crossings}.

\begin{corollary}\label{cor: stationary cops on shortest paths}
    Let $P$ be a shortest path in a 1-plane graph $G$ where at most $k$ edges of $P$ are involved in $\times$-crossings of $G$. Then a set $\mathcal{U}$ of $k+5$ cops can guard $P^\times$ such that $\mathcal{U}^*$ guards all $G$-vertices of $P^\times$.
\end{corollary}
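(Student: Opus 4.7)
The plan is to reuse Lemma \ref{lem: five cops guard a path} verbatim and then add one stationary cop per $\times$-crossing dummy vertex on $P^\times$. Since the embedding is $1$-plane, each edge of $P$ is crossed at most once, so the number of dummy vertices of $P^\times$ that arise from $\times$-crossings of $G$ is at most $k$. This matches the extra budget in the statement.

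First, I would apply Lemma \ref{lem: five cops guard a path} with a designated set $\mathcal{U}_0$ of five cops, and set $\mathcal{U}^* := \mathcal{U}_0^*$. After finitely many rounds, these cops guard the set $Q$ consisting of every $G$-vertex of $P^\times$ together with every dummy vertex of $P^\times$ that is \emph{not} part of an $\times$-crossing. Let $D$ be the remaining dummy vertices of $P^\times$; by the above, $|D| \leq k$. Each $d \in D$ corresponds to an $\times$-crossing $\{(u,v),(w,x)\}$ of $G$ with, say, $(u,v)\in E(P)$. For each such $d$, place one additional cop in the initial round at an endpoint of $(w,x)$ that does not already lie on $V(P)$ (if both $w,x\in V(P)$, no new cop is needed), and leave that cop there for the rest of the game. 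This uses at most $k$ additional cops, giving $k+5$ in total.

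It then remains to check that $P^\times$ is guarded. The $G$-vertices of $P^\times$ and the dummy vertices outside $D$ are guarded by $\mathcal{U}_0$ via Lemma \ref{lem: five cops guard a path}, and in particular all $G$-vertices are guarded by $\mathcal{U}_0^* = \mathcal{U}^*$. For a dummy vertex $d \in D$ as above, I need the robber to be unable to move between opposite endpoints of the crossing without immediate capture. The move $u \leftrightarrow v$ is blocked because $\mathcal{U}_0^*$ guards both $u$ and $v$. The move $w \leftrightarrow x$ is blocked by the extra stationary cop: if it is at $w$, then the robber cannot sit on $w$ without being captured (ruling out $w \to x$) and is captured the moment he attempts $x \to w$; the case where both $w,x$ were already on $V(P)$ is handled automatically by $\mathcal{U}_0^*$. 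Thus $d$ is guarded, and so is all of $P^\times$.

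There is no real obstacle beyond careful bookkeeping: the argument layers a trivial stationary-cop strategy on top of Lemma \ref{lem: five cops guard a path}, and the only minor subtlety is the corner case in which both endpoints of the crossing edge $(w,x)$ already lie on $P$, which saves rather than costs cops.
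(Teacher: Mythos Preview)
Your proposal is correct and follows essentially the same approach as the paper: apply Lemma~\ref{lem: five cops guard a path} with five cops, then for each edge of $P$ involved in an $\times$-crossing place one stationary cop on an endpoint of the other crossing edge, and conclude via (the content of) Observation~\ref{obs: guarding a crossing point}. Your additional case distinction on whether $w,x$ already lie on $V(P)$ is a harmless refinement that the paper omits, since the bound $k+5$ holds either way.
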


\begin{proof}
     By Lemma \ref{lem: five cops guard a path}, we can make $\mathcal{U}^*$ guard all $G$-vertices of $P^\times$ while all cops of $\mathcal{U}$ together guard crossing points on those edges of $P$ that are not part of $\times$-crossings. For each edge $e \in E(P)$ that is involved in an $\times$-crossing with another edge $e' \in E(G)$, we place a cop on one endpoint of $e'$. This guards all crossing points of $P^\times$ on edges that are part of $\times$-crossings (Observation \ref{obs: guarding a crossing point}). 
\end{proof}

\begin{corollary}\label{cor: five cops guard a path with x-crossings}
Let $H$ be a guarded subgraph of $G^\times$ for some $G \in \hat{\mathcal{G}}$. Let $a$ and $b$ be two distinct $G$-vertices of $H$ that have a neighbour in $\mathcal{R} := \mathcal{R}(H)$. Let $I := \mathcal{R} \cup S_H(a) \cup S_H(b)$ and $P$ be a shortest $(a,b)$-path in $I$. Then a set $\mathcal{U}$ of five cops can guard $P^\times$ such that $\mathcal{U}^*$ guards all $G$-vertices of $P^\times$.
\end{corollary}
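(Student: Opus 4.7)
The plan is to apply Lemma~\ref{lem: five cops guard a path} with the ambient 1-plane graph taken to be $I$ (a subgraph of $G$, hence still 1-plane) and with $P$ as the shortest path. Because $H$ is already guarded, the robber is confined to $\mathcal{R} \subseteq I$, so the Aigner--Fromme shadowing argument underlying Lemma~\ref{lem: five cops guard a path} transfers to $I$ as the ambient graph without modification. This gives a set $\mathcal{U}$ of five cops such that $\mathcal{U}^*$ guards every $G$-vertex of $P$ and the remaining four cops guard every dummy of $P^\times$ that arises from a non-$\times$ crossing of $I$.

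The content of the proof is then to argue that every remaining dummy of $P^\times$ is guarded automatically. A case analysis on the memberships in $E(I) = E(\mathcal{R}) \cup S_H(a) \cup S_H(b)$ of the two edges of a crossing, combined with $G \in \hat{\mathcal{G}}$, shows that any $\times$-crossing of $I$ must consist of one edge $(a,u) \in S_H(a)$ and one edge $(b,v) \in S_H(b)$ with $u, v \in V(\mathcal{R})$. In every other configuration, the kite edge provided by $G \in \hat{\mathcal{G}}$ joins either two vertices of $\mathcal{R}$ (so it lies in $E(\mathcal{R}) \subseteq E(I)$) or $a$ (resp.\ $b$) to a vertex of $\mathcal{R}$ (so it lies in $S_H(a) \subseteq E(I)$, resp.\ $S_H(b) \subseteq E(I)$); this is essentially a two-vertex extension of Observation~\ref{obs: adding a vertex to a 1-plane graph with no x-crossings}. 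For the exceptional crossings of the form $\{(a,u),(b,v)\}$, the opposite-endpoint pairs are $\{a,u\}$ and $\{b,v\}$, and since $\mathcal{U}^*$ already guards $a$ and $b$, the robber is captured the instant he tries to move from $u$ to $a$ or from $v$ to $b$; being unable ever to occupy $a$ or $b$ himself, he cannot initiate the reverse moves either. Hence the dummy is guarded with no additional cops.

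If $P^\times$ is interpreted as the sub-drawing of $P$ in $G^\times$, one must also dispose of any dummy on $P^\times$ coming from a crossing of $P$ with a $G$-edge outside $I$. Such an edge lies either in $E(\mathcal{C}(H)) \setminus (S_H(a) \cup S_H(b))$, in which case it is blocked by the guards on $H$, or in a different connected component of $G \setminus E(\mathcal{C}(H))$, in which case it is unreachable from $\mathcal{R}$; either way the robber cannot traverse it, and the $P$-side of the crossing is in any case guarded by $\mathcal{U}^*$ on $V(P)$. The main obstacle I expect is the case analysis isolating exactly when a crossing of $I$ can be an $\times$-crossing: once one knows that the sole offending configuration is $\{(a,u),(b,v)\}$, both the reduction to Lemma~\ref{lem: five cops guard a path} and the ``automatic'' guarding of the exceptional dummies fall out quickly.
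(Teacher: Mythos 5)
Your proposal is correct and follows essentially the same route as the paper: apply Lemma~\ref{lem: five cops guard a path} with $I$ as the ambient graph, use the kite-edge argument (the two-vertex extension of Observation~\ref{obs: adding a vertex to a 1-plane graph with no x-crossings}) to show that the only possible $\times$-crossings of $I$ have $a$ and $b$ as consecutive endpoints, and conclude that those crossing points are guarded because $a$ and $b$ are $G$-vertices of the shortest path and hence guarded by $\mathcal{U}^*$ (Observation~\ref{obs: guarding a crossing point}). Your closing remark about dummy vertices of $P^\times$ arising from crossings with edges of $G$ outside $I$ addresses a detail the paper leaves implicit, and you dispose of it correctly.
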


\begin{proof}
By Lemma \ref{lem: five cops guard a path}, we can make $\mathcal{U}^*$ guard all $G$-vertices of $P^\times$ while all cops of $\mathcal{U}$ together guard crossing points on those edges of $P$ that are not part of $\times$-crossings. In the graph $I$, all uncrossed edges incident with $a$ (resp. $b$) and some vertex of $\mathcal{R}$ belong to $S_H(a)$ (resp. $S_H(b)$). Therefore, any crossing of $I$ with at most one endpoint in $\{a, b\}$ has a kite edge that belongs to $I$. Hence, the only crossings of $I$ that are possibly $\times$-crossings are those where $a$ and $b$ are consecutive endpoints of the crossing. (The edge $(a,b)$ does not belong to $S_H(a) \cup S_H(b)$ since both $a,b \in V(H)$; therefore, $a$ and $b$ cannot be opposite endpoints of a crossing in $I$.)  However, since $P$ is a shortest $(a,b)$-path in $I$, both $a$ and $b$ are guarded. Therefore, by Observation \ref{obs: guarding a crossing point}, all dummy vertices of $\times$-crossings are guarded. 
\end{proof}


\section{Guarding a 1-Plane Graph Without \texorpdfstring{$\times$}{x}-Crossings}\label{sec: guarding 1-plane graphs without x-crossings}

The main objective of this section is to prove Theorem \ref{thm: main theorem}.

\begin{theorem}\label{thm: main theorem}
    The cop-number of any graph $G \in \hat{\mathcal{G}}$ is at most 21.
\end{theorem}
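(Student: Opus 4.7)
The plan is to adapt the Aigner-Fromme cop strategy for planar graphs, as presented in Bonato and Nowakowski's monograph \cite{bonato_book}, by replacing every ``one cop guards a shortest path'' step with the five-cop substitute provided by Corollaries~\ref{cor: main corollary of shortest path} and~\ref{cor: five cops guard a path with x-crossings}. Since these corollaries are available only when the ambient graph belongs to $\hat{\mathcal{G}}$, I would interleave cop placements with passages to subgraphs of the form $\mathcal{R}(H) \cup S_H(a) \cup S_H(b)$, which remain in $\hat{\mathcal{G}}$ by iterated use of Observations~\ref{obs: R is 1-plane without x-crossings} and~\ref{obs: adding a vertex to a 1-plane graph with no x-crossings}.

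The main induction maintains a guarded subgraph $H \subseteq G^\times$ together with its robber territory $\mathcal{R} := \mathcal{R}(H)$. I would bootstrap the process by placing a single cop on an arbitrary $G$-vertex $v_0$ and taking $H_0 = \{v_0\}$. At a generic step, I would choose two $G$-vertices $a, b \in V(H)$ that are adjacent to $\mathcal{R}$ and deploy five fresh cops to guard a shortest $(a,b)$-path in $\mathcal{R} \cup S_H(a) \cup S_H(b)$ via Corollary~\ref{cor: five cops guard a path with x-crossings}. This enlarges $H$ to $H \cup P^\times$ and strictly shrinks $|V(\mathcal{R})|$, because the edges incident to $P^\times$ leave the robber territory.

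Following the planar template, I would organise the cop placements into three phases. \emph{Phase 1:} guard a shortest path $P_1$ emanating from $v_0$ in the current robber territory. \emph{Phase 2:} guard a second shortest path $P_2$ whose endpoints lie on $P_1^\times$ and which lies in the current robber territory, so that $P_1^\times \cup P_2^\times$ contains a simple cycle of the plane graph $G^\times$ enclosing the robber on exactly one side. \emph{Phase 3:} iterate with additional shortest paths between adjacent boundary vertices of the current robber region, retiring earlier paths whose contributions cease to border the robber. Because only a bounded number of guarded paths are simultaneously needed to enclose the robber, and each requires five cops, the budget is $4 \cdot 5 + 1 = 21$: four squads of five cops to run the three phases (with one spare squad active while another is freed), plus the single cop permanently anchoring $v_0$.

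The principal obstacle, flagged by the discussion preceding the theorem, is that two shortest paths of $G$ with common endpoints need not bound a region of the plane in $G$, because they may cross each other many times; the classical Jordan-curve argument therefore does not apply directly to $G$. My workaround is to carry out the entire argument inside the planarisation $G^\times$, where $P_1^\times \cup P_2^\times$ is a genuine plane subgraph and the Jordan curve theorem does separate its surrounding region into faces. The kite-edge convention of Section~\ref{sec: prelims}, ensuring that every crossing of the robber territory bounds a triangular face of $G^\times$ together with an uncrossed kite edge, is essential here: it prevents the robber from slipping past a guarded boundary via an unguarded dummy vertex, so the ``robber is trapped on one side'' step of the classical proof goes through. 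Once this separation lemma is formalised, the strict shrinkage of $\mathcal{R}$ is a routine monovariant argument, and the bound of $21$ cops follows from the phase count above.
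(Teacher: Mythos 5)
Your high-level plan matches the paper's: iterate the Aigner--Fromme scheme inside the planarisation $G^\times$, replace each one-cop shortest-path guard by the five-cop guard of Corollaries~\ref{cor: main corollary of shortest path} and~\ref{cor: five cops guard a path with x-crossings}, restrict to subgraphs of the form $\mathcal{R}(H)\cup S_H(a)\cup S_H(b)$ to stay inside $\hat{\mathcal{G}}$, and use the Jordan curve theorem in $G^\times$ (with kite edges) to trap the robber on one side of a cycle built from two such paths. However, there is a genuine gap in the step where you enlarge $H$ by a newly guarded path. Corollary~\ref{cor: five cops guard a path with x-crossings} guards the path $P$ \emph{as drawn in the subgraph} $I=\mathcal{R}\cup S_H(a)\cup S_H(b)$; it does not guard the image of $P$ in $G^\times$. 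An edge $e$ of $P$ that is uncrossed in $I$ may be crossed in $G$ by an edge $e'$ incident to the old boundary $H$, so the corresponding dummy vertex of $P^\times$ is invisible to your five cops, and the robber can slip across $P$ through that crossing. The paper plugs this hole by deploying, per squad, up to two \emph{additional} cops that move in lockstep with the special cops guarding the old boundary (Claims~\ref{claim: lockstep tertiary cop path case} and~\ref{claim: lockstep tertiary cop cycle case}); a further one or two stationary cops are needed when the interface vertex between the guarded region and the robber territory is itself a dummy vertex (one must occupy a consecutive endpoint of that crossing, as in Cases~1.2, 2.2 and 2.3 and Corollary~\ref{cor: stationary cops on shortest paths}). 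This is why each squad in the paper has \emph{seven} cops and the total is $3\times 7=21$; your budget of five cops per squad is insufficient, and the arithmetic $4\cdot 5+1=21$ reaches the right number by coincidence rather than by a working strategy.

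A secondary issue: your ``four squads plus an anchored cop'' phase structure is not backed by an invariant showing that only a bounded number of guarded paths need to border the robber territory simultaneously, nor that the guarded region strictly grows. The paper makes this precise with invariants $\mathcal{I}1$--$\mathcal{I}4$ (the guarded subgraph $L(\eta)$ strictly increases, the current path or cycle is the entire interface with the robber territory, and at least one squad is always free), together with a case analysis on whether one or several boundary vertices are adjacent to the robber territory and on whether those vertices are $G$-vertices or crossing points. Without some version of these invariants, ``retiring earlier paths whose contributions cease to border the robber'' is not justified, because a vertex deep inside the supposedly retired region could still be adjacent in $G$ to the robber territory via a crossed edge.
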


Our proof broadly follows the same structure as the proof in \cite{bonato_book} used to show that planar graphs have cop-number at most three. We maintain three sets of cops $\mathcal{U}_1$, $\mathcal{U}_2$ and $\mathcal{U}_3$ with seven cops in each set. The cops iteratively guard paths and cycles of $G^\times$ in some special ways, which we term as $\mathbb{P}$-Configuration and $\mathbb{C}$-Configuration respectively. 

\paragraph{$\mathbb{P}$-Configuration} One of the ways in which we will frequently guard a path of $G^\times$ is as follows.
First, we guard a subgraph $Q^\times \subseteq G^\times$ using a set $\mathcal{U}$ of five cops, where $Q^\times$ is derived from a shortest path $Q \subseteq G$ (Corollary \ref{cor: main corollary of shortest path}). Note that $Q^\times$ need not be a simple path in $G^\times$; that is, there can be a crossing $\{e,e'\}$ of $G$ such that both $e,e' \in Q$. Therefore, $Q^\times$ is a walk in $G^\times$. Let $\overline{Q}$ denote the simple path of $G^\times$ obtained from this walk by short-cutting at dummy vertices where $Q$ crosses itself. Since $Q^\times$ is guarded by $\mathcal{U}$, the subgraph $\overline{Q} \subseteq Q^\times$ is also guarded by $\mathcal{U}$. The cops $\mathcal{U}$ guard $\overline{Q}$ such that all $G$-vertices of $\overline{Q}$ are guarded by $\mathcal{U}^*$ (Corollary \ref{cor: main corollary of shortest path}). This motivates the following definition of $\mathbb{P}$-Configuration.

\begin{definition}[$\mathbb{P}$-Configuration]\label{def: path config}
Let $P$ be a path of $G^\times$ and $\mathcal{U}$ be a set of cops guarding $P$. Then $\mathcal{U}$ guards $P$ in \textit{$\mathbb{P}$-Configuration} if $\mathcal{U}^*$ guards all $G$-vertices of $P$.
\end{definition}

\paragraph{$\mathbb{C}$-Configuration} Another essential component of our proof is that of guarding cycles of $G^\times$ since this traps the robber within one side of the cycle, either outside or inside. We use the notation $D = [P_1, P_2]$ to denote a cycle $D \subseteq G^\times$ that is the union of two non-empty vertex-disjoint paths $P_1$ and $P_2$ of $G^\times$ together with two edges that connect endpoints of $P_1$ with endpoints of $P_2$. (For Definition \ref{def: cycle config}, recall the notation $\mathcal{C}(\cdot)$ for cop territory from Definition \ref{def: cop territory} and the notion of guarding in a subgraph of $G$ from Section \ref{sec: prelims}.)

\begin{figure}
    \centering
    \includegraphics[scale = 0.75]{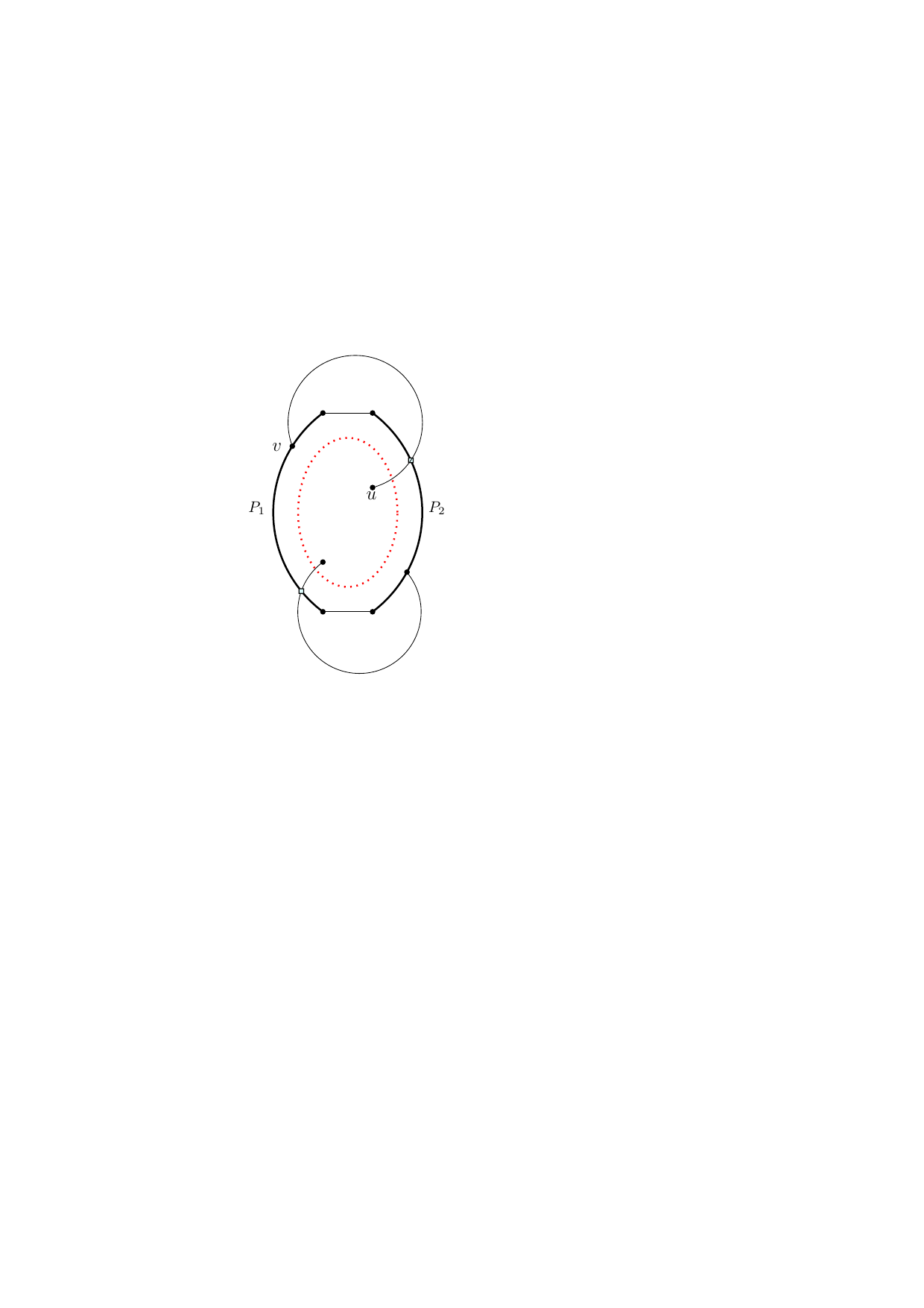}
    \caption{An illustration of a cycle $D = [P_1, P_2]$ guarded in $\mathbb{C}$-Configuration. The dotted region shows the robber territory $\mathcal{R}(D)$.}
    \label{fig: cycle configuration}
\end{figure}

\begin{definition}[$\mathbb{C}$-Configuration]\label{def: cycle config}

A cycle $D = [P_1, P_2]$ is guarded in $\mathbb{C}$-Configuration if there are disjoint sets of cops $\mathcal{U}_1$ and $\mathcal{U}_2$ such that: 

\begin{itemize}   
    \item $\mathcal{U}_1$ guards crossing points of $P_1$ and $\mathcal{U}_1^*$ guards all $G$-vertices of $P_1$ in $G \setminus E(\mathcal{C}(P_2))$.
    
    \item $\mathcal{U}_2$ guards crossing points of $P_2$ and $\mathcal{U}_2^*$ guards all $G$-vertices of $P_2$ in $G \setminus E(\mathcal{C}(P_1))$.
\end{itemize}
\end{definition}

We explain the definition through an example. Suppose that the robber moves from a vertex $u \in \mathcal{R}(D)$ to a vertex $v \in P_1$ (Figure \ref{fig: cycle configuration}). Let $e = (u,v)$ be the only edge of $G$ connecting $u$ and $v$. If $e^\times$ does not intersect with $P_2$, then $e \notin E(\mathcal{C}(P_2))$, and therefore, $\mathcal{U}_1^*$ captures the robber after it lands on $v$. If $e^\times$ intersects with $P_2$, then $e \in E(\mathcal{C}(P_2))$, and $\mathcal{U}_1^*$ may not capture the robber. However, by Definition \ref{def: cycle config}, the cops of $\mathcal{U}_2$ guard crossing points of $P_2$, implying that the robber is captured by $\mathcal{U}_2$ instead.


\paragraph{Invariants} Our strategy to prove Theorem \ref{thm: main theorem} is to progressively increase the number of guarded vertices of $G^\times$ across iterations until the whole graph $G^\times$ is guarded. An \textit{iteration}, indexed by an integer $\eta \geq 0$, is a sequence of consecutive rounds with the property that an iteration ends only when a set of four invariants $\mathcal{I}\ref{item: inv: path or cycle}$ - $\mathcal{I}\ref{item: inv: one cop free}$ hold. (The invariants will be stated shortly afterwards.) The iteration $\eta = 0$ is an empty set and iteration $\eta = 1$ begins with the first round of the game. More generally, every iteration $\eta > 1$ starts with the round occurring immediately after the last round of iteration $\eta - 1$.

For each iteration $\eta \geq 0$, let $L(\eta)$ represent some subgraph of $G^\times$ that is guarded. Initially, we set $L(0) = \emptyset$, and we ensure that if the current iteration is $\eta-1$, where $\eta \geq 1$, and the robber is not yet captured, then $L(\eta-1) \subsetneq L(\eta)$. At the end of each iteration, either a path $P$ will be guarded in $\mathbb{P}$-Configuration or a cycle $D$ will be guarded in $\mathbb{C}$-Configuration. These paths and cycles interface $L(\eta)$ with the robber territory $\mathcal{R}(L(\eta))$, i.e., no vertex of $L(\eta)$ apart from those in the path or cycle will have a neighbour in $\mathcal{R}(L(\eta))$. (Refer to Section \ref{sec: prelims} for the notion of adjacency to the robber territory.) Therefore, guarding these paths and cycles alone will be sufficient to guard all of $L(\eta)$. As we also require the cops to guard a path or a cycle every next iteration, we ensure that at the end of each iteration, at least one of $\mathcal{U}_1$, $\mathcal{U}_2$ or $\mathcal{U}_3$ is \textit{free}, meaning that they do not guard any subgraph of $G^\times$. We summarise these invariants below. At the end of each iteration $\eta \geq 1$:

\begin{enumerate}[({$\mathcal{I}$}1)]
    \item A path $P$ is guarded in $\mathbb{P}$-Configuration or a cycle $D$ is guarded in $\mathbb{C}$-Configuration. \label{item: inv: path or cycle}

    \item Let $J$ be the path $P$ or the cycle $D$ that is guarded. Then $J \subseteq L(\eta)$ and $L(\eta - 1) \subsetneq L(\eta)$.\label{item: inv: increase in guarded subgraph} 
    
    \item $\mathcal{R}(J) = \mathcal{R}(L(\eta))$, and no vertex of $L(\eta) \setminus J$ has a neighbour in $\mathcal{R}(J)$. \label{item: inv: interface}      

    \item At least one of $\mathcal{U}_1$, $\mathcal{U}_2$ or $\mathcal{U}_3$ is free. \label{item: inv: one cop free}
\end{enumerate}

For the first iteration, a set $\mathcal{U}_1$ of five cops guard a shortest path $Q$ in $G$. By this, $\overline{Q}$ is guarded in $G^\times$ in $\mathbb{P}$-Configuration, and we set $L(1) = \overline{Q}$. It is easy to verify that all the invariants above are satisfied. The configuration that the cops assume in subsequent iterations depends upon how many vertices of the path $P$ (in case of $\mathbb{P}$-Configuration) or the cycle $D$ (in case of $\mathbb{C}$-Configuration) are adjacent to the robber territory. In all cases, we will maintain the four invariants above.

\subsection{Cops in \texorpdfstring{$\mathbb{P}$}{P}-Configuration}\label{subsec: cops in Config-P}

Suppose that at the end of $\eta \geq 1$ iterations, a path $P \subseteq G^\times$ is guarded by a set $\mathcal{U}_1$ of cops in $\mathbb{P}$-Configuration. Depending upon how many vertices of $P$ have a neighbour in the robber territory $\mathcal{R} := \mathcal{R}(P)$, we consider different cases.

\subsubsection*{Case 1: Only one vertex of $P$ has a neighbour in $\mathcal{R}$}

\begin{figure}
     \centering
     \begin{subfigure}[b]{0.45\textwidth}
         \centering
         \includegraphics[scale = 0.75, page = 1]{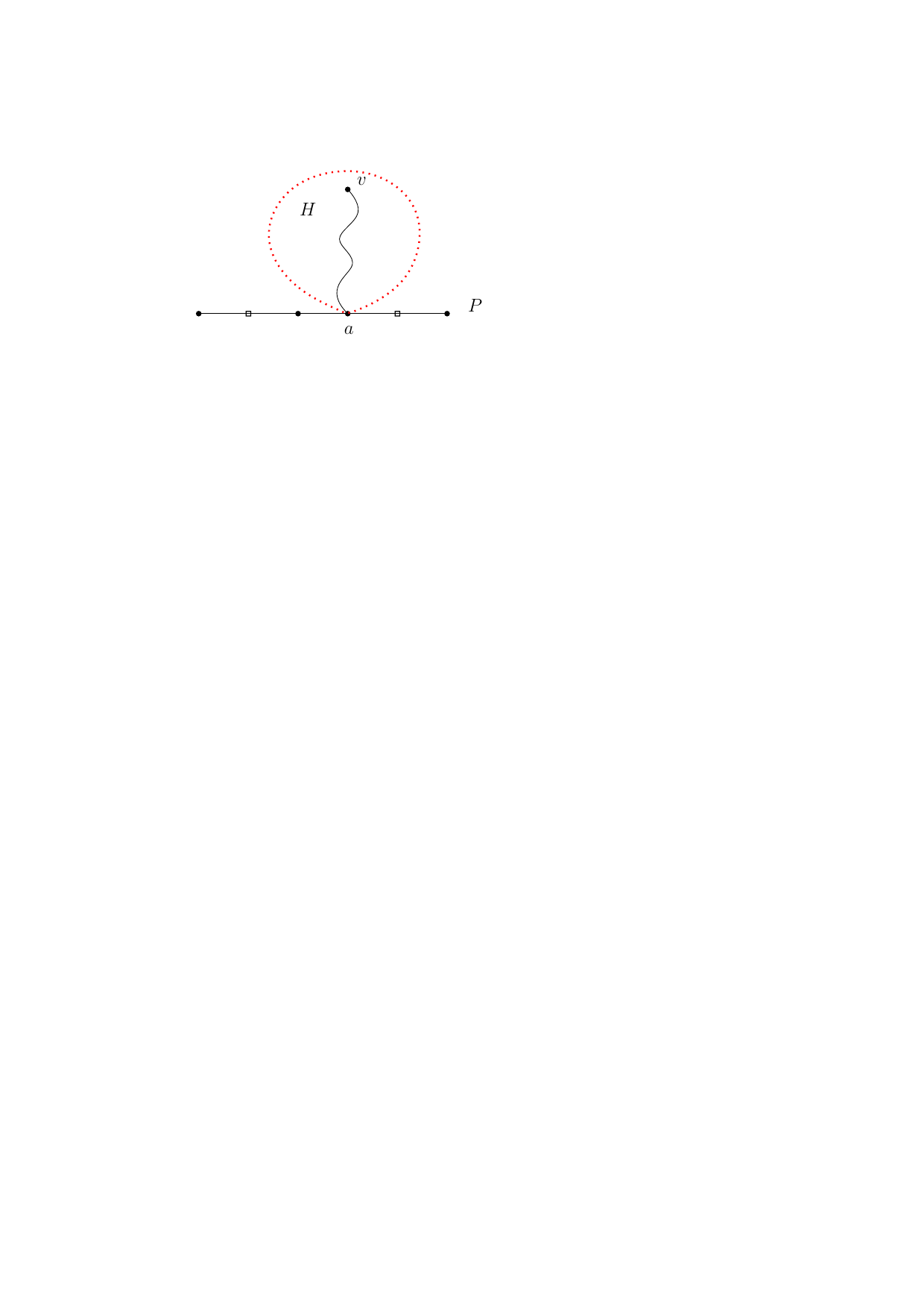}
         \subcaption{Case 1.1}
         \label{fig: case1.1}
     \end{subfigure}
     \hfill
     \begin{subfigure}[b]{0.45\textwidth}
         \centering
         \includegraphics[scale = 0.75, page = 2]{Path_1.pdf}
         \subcaption{Case 1.2}
         \label{fig: case1.2}
     \end{subfigure}
    \caption{An illustration of Case 1 of $\mathbb{P}$-Configuration. The vertices inside the dotted region denote the graph $H$.}
    \label{fig: case 1 of Path}
\end{figure}

Let $a \in P$ be the single vertex that has a neighbour in $\mathcal{R}$. Depending upon whether $a \in V(G)$ or not, we consider different subcases (Figure \ref{fig: case 1 of Path}). For all cases, we fix a $G$-vertex $v \in \mathcal{R}$.

\paragraph{Case 1.1: $a \in V(G)$} Since $a$ is the only vertex of $P$ that has a neighbour in the robber territory, $S_P(a) = \{(a,x): x \in \mathcal{R}\}$. In other words, all edges incident with $a$ and some vertex of the robber territory belong to $S_P(a)$. By Observation \ref{obs: adding a vertex to a 1-plane graph with no x-crossings}, the graph $H := \mathcal{R} \cup S_P(a)$ is a 1-plane graph without $\times$-crossings (Figure \ref{fig: case1.1}). By Corollary \ref{cor: main corollary of shortest path}, we can use five cops from $\mathcal{U}_2$ to guard a shortest $(a,v)$-path $P'$ in $H$. As a result, $\overline{P'} \subseteq G^\times$ is guarded in $\mathbb{P}$-Configuration. Let $Q = \overline{P'}$, and we free $\mathcal{U}_1$. 

\paragraph{Case 1.2: $a \notin V(G)$} Let $\{e_1,e_2\}$ be the crossing at $a$. Since $a$ has a neighbour in $\mathcal{R}$, there is an endpoint of the crossing, say $a_1$, that belongs to $\mathcal{R}$.
Let $a_2$ be another endpoint of the crossing consecutive with $a_1$. Consider the graph $H := \mathcal{R} \cup \{e_1,e_2\}$. Since $\mathcal{R}$ has no $\times$-crossings (Observation \ref{obs: R is 1-plane without x-crossings}), $H$ has at most one $\times$-crossing, possibly $\{e_1,e_2\}$ (Figure \ref{fig: case1.2}). First, we place a cop of $\mathcal{U}_2 \setminus \{\mathcal{U}_2^*\}$ at $a_2$. Now let $P'$ be a shortest $(a_1,v)$-path in $H$. By possibly using the additional cop at $a_2$, we can guard $(P')^\times \subseteq H^\times$ using six cops of $\mathcal{U}_2$ (Corollary \ref{cor: stationary cops on shortest paths}). If $a \in (P')^\times$, set $Q = \overline{P'}$, else set $Q := \overline{P'} \cup {(a,a_1)}$.
In either case, $a$ is guarded, and the path $Q$ is guarded by $\mathcal{U}_2$ in $\mathbb{P}$-Configuration. Finally, we free $\mathcal{U}_1$.

\medskip

After $Q$ has been guarded and $\mathcal{U}_1$ freed, we complete iteration $\eta + 1$ and set $L(\eta + 1) := L(\eta) \cup Q$. 

\begin{claim}
    At the end of iteration $\eta + 1$, all the four invariants hold.
\end{claim}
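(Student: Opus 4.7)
The plan is to verify the four invariants in turn, taking $J := Q$. Invariant $\mathcal{I}\ref{item: inv: path or cycle}$ is established by construction: $Q$ is guarded in $\mathbb{P}$-Configuration, by five cops of $\mathcal{U}_2$ via Corollary~\ref{cor: main corollary of shortest path} in Case 1.1, and by $\mathcal{U}_2$ together with the stationary auxiliary cop at $a_2$ via Corollary~\ref{cor: stationary cops on shortest paths} in Case 1.2. Invariant $\mathcal{I}\ref{item: inv: one cop free}$ is immediate because $\mathcal{U}_1$ was freed. For $\mathcal{I}\ref{item: inv: increase in guarded subgraph}$, the inclusion $Q \subseteq L(\eta+1)$ holds by definition, and $L(\eta) \subsetneq L(\eta+1)$ follows because the $G$-vertex $v \in V(Q)$ lies in $\mathcal{R}(L(\eta)) = \mathcal{R}$, while every $G$-vertex of $L(\eta)$ is isolated in $G \setminus E(\mathcal{C}(L(\eta)))$ and is therefore disjoint from $\mathcal{R}(L(\eta))$.

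The hard part will be $\mathcal{I}\ref{item: inv: interface}$, which asks for both $\mathcal{R}(Q) = \mathcal{R}(L(\eta+1))$ and the absence of a neighbour in $\mathcal{R}(Q)$ for any vertex of $L(\eta+1) \setminus Q$. The inclusion $\mathcal{R}(L(\eta+1)) \subseteq \mathcal{R}(Q)$ is immediate from $\mathcal{C}(Q) \subseteq \mathcal{C}(L(\eta+1))$. For the reverse direction I plan first to show the intermediate fact $V(\mathcal{R}(Q)) \subseteq V(\mathcal{R}(P))$: any path from the robber in $G \setminus E(\mathcal{C}(Q))$ that left $V(\mathcal{R}(P))$ would traverse a first edge $e \in \mathcal{C}(P) \setminus \mathcal{C}(Q)$, so $e^\times$ would meet some $w \in P \setminus Q$. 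Since $a \in V(Q)$ in both Case 1.1 and Case 1.2, we have $w \in P \setminus \{a\}$, and the Case 1 hypothesis gives $S_P(w) = \emptyset$. Whether $w$ is a $G$-vertex (so $w$ is an endpoint of $e$ with the other endpoint in $\mathcal{R}(P)$ and $e^\times \cap P = \{w\}$) or a dummy vertex (so the half-edges of $e^\times$ at $w$ are edges of $G^\times$ incident with $V(\mathcal{R}(P))$), one directly produces a member of $S_P(w)$, a contradiction.

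With $V(\mathcal{R}(Q)) \subseteq V(\mathcal{R}(P))$ in hand, the plan is to rule out any edge $e \in E(\mathcal{R}(Q)) \cap \mathcal{C}(L(\eta+1))$: such an $e$ satisfies $e^\times \cap Q = \emptyset$ yet meets some $u \in L(\eta) \setminus Q$, while both endpoints of $e$ lie in $V(\mathcal{R}(Q)) \subseteq V(\mathcal{R}(P))$. If $u \in L(\eta) \setminus P$, the previous iteration's $\mathcal{I}\ref{item: inv: interface}$ yields $S_{L(\eta)}(u) = \emptyset$; if $u \in P \setminus Q \subseteq P \setminus \{a\}$, Case 1's hypothesis gives $S_P(u) = \emptyset$. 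In both sub-cases the half-edges of $e^\times$ at $u$, which go to the $\mathcal{R}(P)$-endpoints of $e$, produce a member of the corresponding $S$-set, contradicting its emptiness. This forces $\mathcal{R}(Q) = \mathcal{R}(L(\eta+1))$, and claim (b) follows by the same dichotomy on $L(\eta+1) \setminus Q = L(\eta) \setminus Q$: since $L(\eta+1) \supseteq L(\eta) \supseteq P$ and $\mathcal{R}(L(\eta+1)) = \mathcal{R}(Q) \subseteq \mathcal{R}(L(\eta)) = \mathcal{R}(P)$, the conditions defining $S_{L(\eta+1)}(u)$ are strictly tighter than those defining $S_{L(\eta)}(u)$ (for $u \in L(\eta) \setminus P$) or $S_P(u)$ (for $u \in P \setminus \{a\}$), so $S_{L(\eta+1)}(u) \subseteq S_{L(\eta)}(u) = \emptyset$ in the first case and $S_{L(\eta+1)}(u) \subseteq S_P(u) = \emptyset$ in the second.
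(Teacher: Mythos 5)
Your proof is correct and takes essentially the same route as the paper's: ($\mathcal{I}$1) and ($\mathcal{I}$4) hold by construction, ($\mathcal{I}$2) follows because the fixed $G$-vertex $v$ lies in $Q$ but not in $L(\eta)$, and ($\mathcal{I}$3) follows because $a$ is the unique vertex of $P$ with a neighbour in $\mathcal{R}$ and remains guarded as a vertex of $Q$. The only difference is one of detail: you expand the equality $\mathcal{R}(Q)=\mathcal{R}(L(\eta+1))$ and the emptiness of the sets $S_{L(\eta+1)}(u)$ into a full argument via first-offending-edge and monotonicity of the $S$-sets, whereas the paper asserts these in a single sentence.
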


\begin{proof}
Invariants ($\mathcal{I}$\ref{item: inv: path or cycle}) and ($\mathcal{I}$\ref{item: inv: one cop free}) hold trivially. Since $a$ is the only vertex of $P$ that has a neighbour in $\mathcal{R}(P)$ and $a$ remains guarded throughout (because $a \in P \cap Q$), we have $\mathcal{R}(L(\eta + 1)) = \mathcal{R}(Q)$ and no vertex of $L(\eta+1) \setminus Q$ has a neighbour in $\mathcal{R}(Q)$.
Therefore, ($\mathcal{I}$\ref{item: inv: interface}) holds. Since $Q \subseteq L(\eta + 1)$ and $v \in Q \setminus L(\eta)$, we have $L(\eta) \subsetneq L(\eta + 1)$; hence ($\mathcal{I}$\ref{item: inv: increase in guarded subgraph}) holds.
\end{proof}

\subsubsection*{Case 2: More than one vertex of $P$ has a neighbour in $\mathcal{R}$}

Consider an enumeration of the vertices of $P$ in the order of their appearance on the path. Let $u$ and $v$ be the first and last vertices of $P$ that have a neighbour in $\mathcal{R}$. Let $a$ and $b$ be their (not necessarily distinct) neighbours in $\mathcal{R}$. Depending upon whether $u$ and $v$ are $G$-vertices or not, we consider different subcases.

\paragraph{Case 2.1: $u,v \in V(G)$} If $u,v \in V(G)$, then let $Q'$ be a shortest $(u,v)$-path in $H := \mathcal{R} \cup S_P(u) \cup S_P(v)$. By Corollary \ref{cor: five cops guard a path with x-crossings}, we can use five cops from $\mathcal{U}_2$ to guard $Q'$. Then $Q := \overline{Q'} \setminus \{u,v\}$ is guarded in $\mathbb{P}$-Configuration. 

\paragraph{Case 2.2: $u \in V(G)$ and $v \notin V(G)$ (or vice-versa).} Suppose that $u \in V(G)$ and $v \notin V(G)$. Let $Q'$ be a shortest $(u,b)$-path in $H := \mathcal{R} \cup S_P(u)$. By Observation \ref{obs: adding a vertex to a 1-plane graph with no x-crossings}, we can guard $Q'$ using five cops of $\mathcal{U}_2$. Then $Q := \overline{Q'} \setminus \{u\}$ is guarded in $\mathbb{P}$-Configuration. (The case where $u \notin V(G)$ and $v \in V(G)$ is symmetric.)

\paragraph{Case 2.3: $u, v \notin V(G)$.}
If both $u$ and $v$ are dummy vertices, then guard a shortest $(a,b)$-path $Q'$ in $H := \mathcal{R}$ using five cops from $\mathcal{U}_2$. Then $Q := \overline{Q'}$ is guarded in $\mathbb{P}$-Configuration. 

\medskip
Note that $\overline{Q'}$ is a path in the planarisation of $H$, and so is $Q$. This however does not imply that $Q$ is a path in $G^\times$. This is because there may exist an edge $e \in E(Q) \cap E(G)$ that is crossed by another edge $e'$ of $G$ incident with some vertex of $P$. (Since $e' \in \mathcal{C}(P)$, the edge $e$ may be uncrossed in $H$, and therefore also in $Q$ (Figure \ref{fig: path-2})). In this case, we subdivide the edge $e$ once to insert the dummy vertex. Having done this for all such edges, we get a path $P_2$ that is a subgraph of $G^\times$. Let $P_1$ be the sub-path of $P$ from $u$ to $v$. In Claim \ref{claim: P_1 and P_2 are vertex disjoint}, we show that $P_1$ and $P_2$ are non-empty paths forming a cycle.

\begin{claim}\label{claim: P_1 and P_2 are vertex disjoint}
$P_1$ and $P_2$ are non-empty and vertex disjoint paths of $G^\times$. Hence, $D = [P_1, P_2]$ is a simple cycle of $G^\times$.
\end{claim}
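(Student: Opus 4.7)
The plan is to verify three things in order: (1) $P_1$ and $P_2$ are non-empty paths of $G^\times$, (2) their vertex sets are disjoint, and (3) together with the connecting edges $(u,a)$ and $(v,b)$ they form a simple cycle. Steps (1) and (3) will follow almost immediately from the construction, so the real work lies in step (2); the crucial tool throughout will be the edge-disjointness $E(\mathcal{R}) \cap E(\mathcal{C}(P)) = \emptyset$, which comes directly from the definition of the robber territory $\mathcal{R} = \mathcal{R}(P)$.

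For non-emptiness, since Case 2 assumes that more than one vertex of $P$ is adjacent to $\mathcal{R}$, the first and last such vertices satisfy $u \neq v$, and hence $P_1$, being the subpath of $P$ between two distinct vertices, contains at least one edge. The path $Q'$ is a shortest $(a,b)$-path in $\mathcal{R}$, which exists (as a trivial one-vertex path if $a=b$), and its image $P_2$ in $G^\times$ contains at least the vertex $a$, so $P_2$ is also non-empty.

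For vertex-disjointness I would argue separately about $G$-vertices and dummy vertices. For $G$-vertices: every edge of $G$ incident to a $G$-vertex $w$ of $P$ automatically belongs to $E(\mathcal{C}(P))$, since its image in $G^\times$ contains $w \in V(P)$. Hence in $G \setminus E(\mathcal{C}(P))$ every $G$-vertex of $P$ is isolated, and since the robber is guarded away from $V(P)$, no $G$-vertex of $P$ can lie in $V(\mathcal{R})$. Combined with the facts that $G$-vertices of $P_1$ lie in $V(P)$ while $G$-vertices of $P_2$ lie in $V(\mathcal{R})$, no $G$-vertex is shared between $P_1$ and $P_2$.

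For dummy vertices, let $d \in V(P_2)$ correspond to a crossing $\{e_1,e_2\}$ of $G$. By the two constructions contributing to $V(P_2)$, either both $e_1,e_2$ lie in $E(\mathcal{R})$ (the dummies inherited from $\overline{Q'}$) or exactly one of them, say $e_1$, lies in $E(Q') \subseteq E(\mathcal{R})$ while the other is incident to a vertex of $P$ (the subdivision dummies). In either case $e_1 \in E(\mathcal{R})$, so $e_1 \notin E(\mathcal{C}(P))$. If $d$ were to lie in $V(P_1) \subseteq V(P)$, then $d \in V(e_1^\times)$ would give $e_1^\times \cap P \neq \emptyset$, forcing $e_1 \in E(\mathcal{C}(P))$, a contradiction. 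Finally, since $u$ and $v$ are dummy vertices of $P$ whose four neighbours in $G^\times$ are the endpoints of their respective crossings, and $a,b$ are chosen among those endpoints in $V(\mathcal{R})$, the edges $(u,a)$ and $(v,b)$ of $G^\times$ glue $P_1$ (with endpoints $u$ and $v$) and $P_2$ (with endpoints $a$ and $b$, possibly equal) into the simple cycle $D = [P_1,P_2]$. The main obstacle will be the dummy-vertex bookkeeping --- recognising that both kinds of $P_2$-dummies always involve an edge of $\mathcal{R}$, so that edge-disjointness of $\mathcal{R}$ and $\mathcal{C}(P)$ rules out any coincidence with $V(P)$.
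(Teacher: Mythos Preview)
Your argument is written as though we are always in Case~2.3: you describe $Q'$ as a shortest $(a,b)$-path in $\mathcal{R}$, you assert $E(Q')\subseteq E(\mathcal{R})$, and in the last paragraph you take $u$ and $v$ to be dummy vertices with connecting edges $(u,a)$ and $(v,b)$. But the claim is stated \emph{after} all three subcases, and in Cases~2.1 and~2.2 the construction is different. There $u$ (and possibly $v$) is a $G$-vertex, $Q'$ is a shortest path in $H=\mathcal{R}\cup S_P(u)\cup S_P(v)$ (or $\mathcal{R}\cup S_P(u)$) with $u$ and/or $v$ as an endpoint, and $Q$ is obtained by deleting $u$ and/or $v$ from $\overline{Q'}$. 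In particular the first and last edges of $Q'$ lie in $S_P(u)$ or $S_P(v)$, hence in $\mathcal{C}(P)$ and \emph{not} in $E(\mathcal{R})$; your key assertion $E(Q')\subseteq E(\mathcal{R})$ fails, and the connecting edges of $D$ are not $(u,a),(v,b)$ but rather half-edges of those first and last edges of $\overline{Q'}$.

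The fix is to use the defining property of $S_P(u)$ directly: for $e\in S_P(u)$ one has $e^\times\cap P=\{u\}$, so once $u$ is removed the remaining portion of $e^\times$ is disjoint from $P$. This is exactly what the paper does, and it also gives the non-emptiness of $P_2$ in Case~2.1 (one must check that $(u,v)\notin E(H)$, so $\overline{Q'}$ has an internal vertex surviving the deletion of $u,v$). Your separation into $G$-vertices versus dummy vertices and the use of $E(\mathcal{R})\cap E(\mathcal{C}(P))=\emptyset$ are perfectly sound for Case~2.3 and can be adapted to the other cases once you track the $S_P(u),S_P(v)$ edges correctly, but as written the proof covers only one of the three subcases.
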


\begin{proof}
Since $u,v \in P_1$, we have $P_1 \neq \emptyset$. The path $P_2$ is not empty because $Q$ contains a vertex of $G^\times$ adjacent to $u$. (Note that this vertex cannot be $v$ because $S_P(u)$ and $S_P(v)$ do not contain the edge $(u,v)$.) It remains to show that $P_1$ and $P_2$ are vertex disjoint. For any edge $e \in S_P(u)$, we have that $e^\times \cap P = \{u\}$, and likewise for edges of $S_P(v)$. By the construction of $Q$ and $P_2$ above, we can infer that $P_2$ is vertex disjoint from $P_1$. Moreover, the endpoints of $P_1$ and $P_2$ can be joined by edges of $G^\times$ to obtain a simple cycle $D = [P_1, P_2]$ of $G^\times$.
\end{proof}

\begin{figure}
    \centering
    \includegraphics[scale = 0.85]{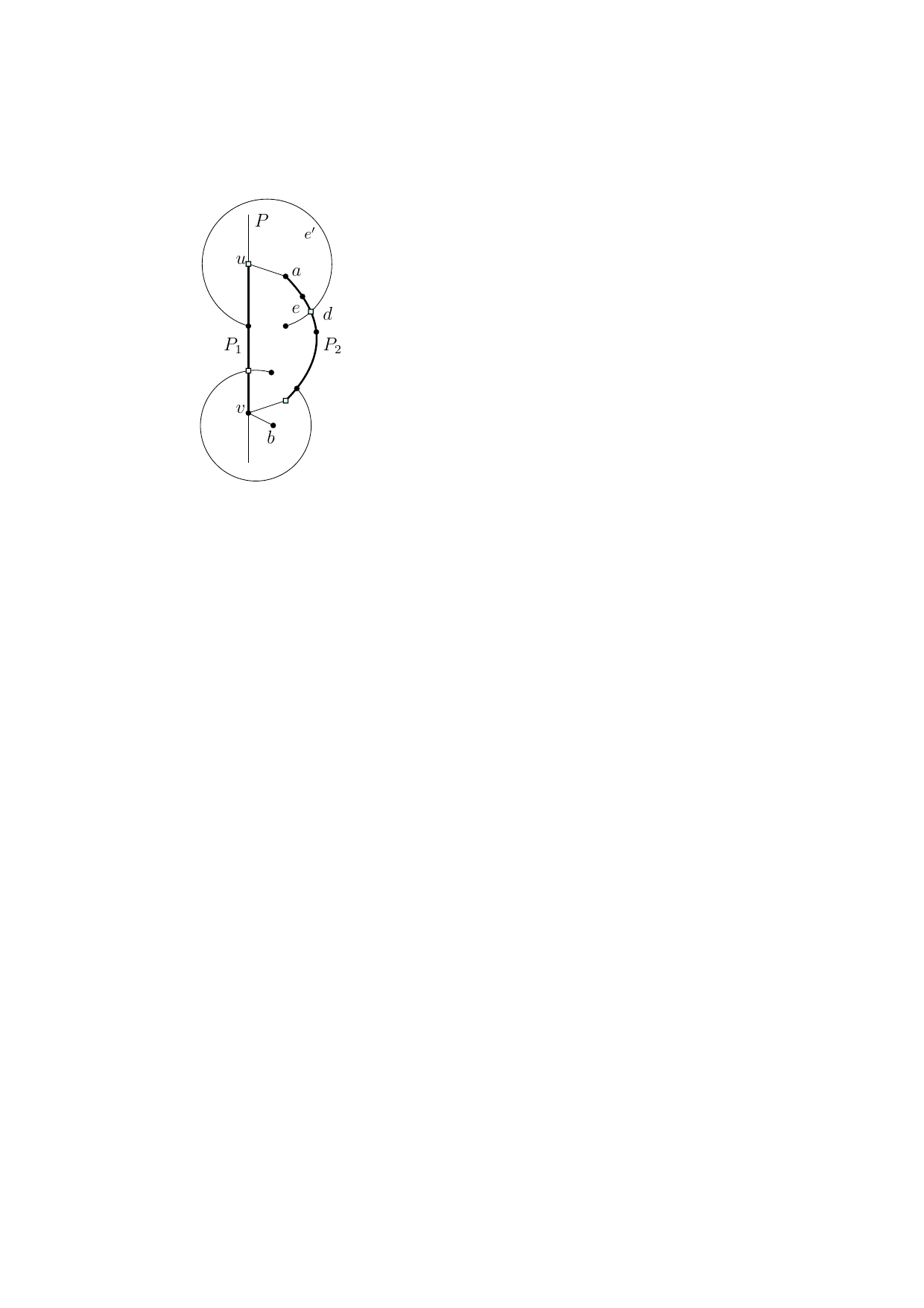}
    \caption{An illustration for Case 2 of $\mathbb{P}$-Configuration. }
    \label{fig: path-2}
\end{figure}

We now show that by some minor modifications, $\mathcal{U}_1$ and $\mathcal{U}_2$ can be made to guard $D$ in $\mathbb{C}$-Configuration. Note that the conditions necessary for $\mathbb{C}$-Configuration are already satisfied by the cops of $\mathcal{U}_1$ guarding $P_1$. In each of Cases 2.1 up to 2.3, the cops of $\mathcal{U}_2$ guard $Q$ in $H$ in $\mathbb{P}$-Configuration. Hence, $\mathcal{U}_2^*$ guards $G$-vertices of $P_2$ in $G \setminus E(\mathcal{C}(P))$. As $u$ and $v$ are the first and last vertices of $P$ to have a neighbour in $\mathcal{R}$, there is no edge $e \in E(\mathcal{C}(P))$ such that $e^\times$ intersects $P_2$ and $P \setminus P_1$. Therefore, $\mathcal{U}_2^*$ guards $G$-vertices of $P_2$ in $G \setminus E(\mathcal{C}(P_1))$. While $\mathcal{U}_2$ guards all dummy vertices of $Q$, there may exist dummy vertices on $P_2$ that are not dummy vertices on $Q$. (Recall from our earlier discussion that this happens because some edges of $Q$ may be crossed by edges that are incident to $P$.) Therefore, $\mathcal{U}_2$ may not guard all crossing points of $P_2$. To rectify this, we use one more cop of $\mathcal{U}_2$, and deploy it to imitate $\mathcal{U}_1^*$. In other words, this cop goes exactly in lockstep with $\mathcal{U}_1^*$. In Claim \ref{claim: lockstep tertiary cop path case}, we prove that this strategy works.

\begin{claim}\label{claim: lockstep tertiary cop path case}
    All dummy vertices of $P_2$ are guarded by $\mathcal{U}_2$.
\end{claim}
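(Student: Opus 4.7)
The plan is to partition the dummy vertices of $P_2$ into two types. Any dummy vertex of $P_2$ that already appears on the planarised path $Q$ is guarded for free by $\mathcal{U}_2$, since $\mathcal{U}_2$ was set up to guard $Q$ in $\mathbb{P}$-Configuration at the start of this case, and that configuration explicitly guards every crossing point of $Q$. The real work concerns the dummy vertices that lie on $P_2$ but not on $Q$: by construction these are precisely the crossing points inserted by the subdivision step applied to edges $e \in E(Q)\cap E(G)$ that were uncrossed in $H$ but are crossed in $G$ by some edge $e'$ incident with a vertex of $P$.

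For such a new dummy vertex $d$, write $e = (x,y)$ and $e' = (p,q)$ with $p \in V(P)$ (this is forced, since that is exactly the reason $e'$ was absent from $H$). My plan is to exhibit two consecutive endpoints of the crossing $\{e,e'\}$ that are simultaneously guarded by cops of $\mathcal{U}_2$, and then invoke Observation~\ref{obs: guarding a crossing point}. Since $e \in E(Q)\cap E(G)$, both $x$ and $y$ are $G$-vertices of $Q$, so by the $\mathbb{P}$-Configuration hypothesis the cop $\mathcal{U}_2^*$ already guards both of them. On the other side of the crossing, $p$ is a $G$-vertex of $P$ and is therefore guarded by $\mathcal{U}_1^*$; because the additional lockstep cop of $\mathcal{U}_2$ sits on exactly the same vertex as $\mathcal{U}_1^*$ at all times, the guarding of $p$ is inherited by a cop of $\mathcal{U}_2$ as well. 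Consequently $\{x,p\}$ (and likewise $\{y,p\}$) is a pair of consecutive endpoints of $\{e,e'\}$ that is guarded by cops of $\mathcal{U}_2$, and Observation~\ref{obs: guarding a crossing point} yields that $d$ is guarded by $\mathcal{U}_2$.

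The main point that merits careful checking is that the lockstep cop really does inherit the $G$-vertex-guarding role of $\mathcal{U}_1^*$. This should be essentially automatic from the definition of guarding, since that definition is phrased in terms of what happens when the robber steps on, or adjacent to, a cop's vertex; two cops occupying the same vertex at the end of every cops' turn therefore enforce identical capture guarantees, and the lockstep cop's moves are by design always feasible (it just copies a legal move). A small bookkeeping remark is that the lockstep cop consumes one cop of $\mathcal{U}_2$ beyond the five used to guard $Q$, bringing the total to six and still fitting inside the budget $|\mathcal{U}_2|=7$, so the strategy is feasible and leaves enough resources to meet the $\mathbb{C}$-Configuration requirements on the $\mathcal{U}_2$ side.
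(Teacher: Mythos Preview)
Your proof is correct and follows essentially the same approach as the paper's own argument: partition the dummy vertices of $P_2$ into those already on $Q$ (handled by the $\mathbb{P}$-Configuration guarantee) and the newly inserted ones arising from an edge $e\in E(Q)\cap E(G)$ crossed by some $e'$ with an endpoint on $P$, then observe that the endpoints of $e$ are guarded by $\mathcal{U}_2^*$ while the $P$-endpoint of $e'$ is guarded by $\mathcal{U}_1^*$ and hence by the lockstep cop of $\mathcal{U}_2$, and conclude via Observation~\ref{obs: guarding a crossing point}. Your added remarks justifying why the lockstep cop inherits the guarding role of $\mathcal{U}_1^*$ and checking the cop budget are sound and match the surrounding discussion in the paper.
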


\begin{proof}
    We have already seen that all dummy vertices of $Q$ are guarded, hence we only consider dummy vertices of $P_2$ that do not belong to $Q$. Let $d$ be one such dummy vertex. Then the crossing at $d$ is of the form $\{e,e'\}$ where $e \in Q$ and $e'$ is incident to a vertex $x \in P$. We have seen that all $G$-vertices of $Q$ are guarded, hence the endpoints of $e$ are guarded. The vertex $x$ is guarded by $\mathcal{U}_1^*$ as $P$ is guarded in $\mathbb{P}$-Configuration. Since we have placed a cop of $\mathcal{U}_2$ in lockstep with $\mathcal{U}_1^*$, vertex $x$ is also guarded by $\mathcal{U}_2$. Therefore, by Observation \ref{obs: guarding a crossing point}, the cops of $\mathcal{U}_2$ guard $d$.  
\end{proof}


With this, the cops of $\mathcal{U}_2$ guard all crossing points of $P_2$, and $D$ is guarded in $\mathbb{C}$-Configuration. We end the current iteration $\eta + 1$. Since $D$ is guarded, $\mathcal{R}(D)$ must strictly lie inside or outside $D$. In either case, we set $L(\eta + 1) := L(\eta) \cup P_2$. 

\begin{claim}
    At the end of iteration $\eta+1$, all the four invariants hold.
\end{claim}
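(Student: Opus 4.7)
The plan is to verify the four invariants at the end of iteration $\eta+1$ with $J := D$ and $L(\eta+1) := L(\eta) \cup P_2$. Invariants $\mathcal{I}\ref{item: inv: path or cycle}$ and $\mathcal{I}\ref{item: inv: one cop free}$ are immediate: the preceding discussion together with Claim \ref{claim: lockstep tertiary cop path case} already shows that $\mathcal{U}_1 \cup \mathcal{U}_2$ guards $D = [P_1, P_2]$ in $\mathbb{C}$-Configuration, while $\mathcal{U}_3$ was untouched throughout the iteration and so remains free.

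For $\mathcal{I}\ref{item: inv: increase in guarded subgraph}$, I would note that $P_1 \subseteq P \subseteq L(\eta)$ by $\mathcal{I}\ref{item: inv: increase in guarded subgraph}$ at iteration $\eta$, and $P_2 \subseteq L(\eta+1)$ by construction; together with the two connecting edges of $D$ (whose endpoints lie in $V(P_1) \cup V(P_2)$) this yields $D \subseteq L(\eta+1)$. The strict containment $L(\eta) \subsetneq L(\eta+1)$ follows from Claim \ref{claim: P_1 and P_2 are vertex disjoint} (giving $P_2 \neq \emptyset$) combined with the observation that every vertex of $P_2$ either lies in $\mathcal{R}(P)$ or is a dummy vertex freshly inserted along $Q$, and hence lies outside $L(\eta)$. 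For the adjacency half of $\mathcal{I}\ref{item: inv: interface}$, using $V(P_2) \cap V(L(\eta)) = \emptyset$ one obtains $L(\eta+1) \setminus D = L(\eta) \setminus P_1 = (L(\eta) \setminus P) \cup (P \setminus P_1)$. By $\mathcal{I}\ref{item: inv: interface}$ at iteration $\eta$, no vertex of $L(\eta) \setminus P$ has a neighbour in $\mathcal{R} := \mathcal{R}(P)$, and by the choice of $u$ and $v$ as the first and last vertices of $P$ adjacent to $\mathcal{R}$, no vertex of $P \setminus P_1$ has a neighbour in $\mathcal{R}$ either.

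It remains to upgrade the ``no neighbour in $\mathcal{R}(P)$'' statement to ``no neighbour in $\mathcal{R}(D)$'' and to deduce $\mathcal{R}(D) = \mathcal{R}(L(\eta+1))$. My approach is to first establish $V(\mathcal{R}(D)) \subseteq V(\mathcal{R}(P))$ by contradiction: if some $x \in V(\mathcal{R}(D)) \setminus V(\mathcal{R}(P))$ existed, then along any path from the robber to $x$ in $G \setminus E(\mathcal{C}(D))$, the first edge $(y,z)$ with $y \in V(\mathcal{R}(P))$ and $z \notin V(\mathcal{R}(P))$ must lie in $E(\mathcal{C}(P)) \setminus E(\mathcal{C}(D))$. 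But any such edge has its planarisation crossing $P$ at a dummy vertex $d \in P \setminus P_1$, and $d$ would then be adjacent in $G^\times$ to $y \in \mathcal{R}(P)$, contradicting the ``no neighbour'' property just established. From this inclusion, the second half of $\mathcal{I}\ref{item: inv: interface}$ follows immediately, and $\mathcal{R}(D) = \mathcal{R}(L(\eta+1))$ then follows because any additional edge in $\mathcal{C}(L(\eta+1)) \setminus \mathcal{C}(D)$ has its planarisation intersecting $L(\eta+1) \setminus D$ and so cannot have both endpoints in $\mathcal{R}(D)$, whence removing it does not disconnect the robber's component. The main obstacle is the vertex-set inclusion $V(\mathcal{R}(D)) \subseteq V(\mathcal{R}(P))$: since $\mathcal{C}(D) \not\supseteq \mathcal{C}(P)$ in general (there can be edges crossing $P \setminus P_1$ but not $D$), the shrinking of the robber's region relies crucially on the extremality of $u$ and $v$ rather than on purely set-theoretic containment.
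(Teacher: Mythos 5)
Your proposal is correct and follows essentially the same route as the paper: ($\mathcal{I}$\ref{item: inv: path or cycle}) and ($\mathcal{I}$\ref{item: inv: one cop free}) are immediate, ($\mathcal{I}$\ref{item: inv: interface}) follows from the extremality of $u$ and $v$ together with ($\mathcal{I}$\ref{item: inv: interface}) at iteration $\eta$, and ($\mathcal{I}$\ref{item: inv: increase in guarded subgraph}) from the non-emptiness of $P_2$ and the fact that a vertex of $P_2$ adjacent to $u$ lies outside $L(\eta)$; you merely spell out the containment $V(\mathcal{R}(D)) \subseteq V(\mathcal{R}(P))$ and the identity $\mathcal{R}(D) = \mathcal{R}(L(\eta+1))$ that the paper asserts tersely. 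One small omission: your boundary edge $(y,z) \in E(\mathcal{C}(P)) \setminus E(\mathcal{C}(D))$ may meet $P \setminus P_1$ at a $G$-vertex rather than a dummy vertex, but that case contradicts the same ``no neighbour in $\mathcal{R}(P)$'' property via $S_P(z) \neq \emptyset$, so the argument goes through unchanged.
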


\begin{proof}
Trivially, ($\mathcal{I}$\ref{item: inv: path or cycle}) and ($\mathcal{I}$\ref{item: inv: one cop free}) are satisfied. As $u$ and $v$ are the first and last vertices on $P$ with a neighbour in $\mathcal{R}(P)$, any vertex of $P$ with a neighbour in $\mathcal{R}(P)$ belongs to $P_1$. Since $P_1$ remains guarded throughout (because $P_1 \subseteq P \cap D$),  we have $\mathcal{R}(L(\eta + 1)) = \mathcal{R}(D)$ and no vertex of $L(\eta+1) \setminus D$ has a neighbour in $\mathcal{R}(D)$. Hence ($\mathcal{I}$\ref{item: inv: interface}) is satisfied. We will show that ($\mathcal{I}$\ref{item: inv: increase in guarded subgraph}) also holds. Clearly $D \subseteq L(\eta + 1)$. Let $w$ be the vertex of $P_2$ adjacent to $u$. (By Claim \ref{claim: P_1 and P_2 are vertex disjoint}, $P_2$ is not empty.) If $w \in V(G)$, then $w \in \mathcal{R}(P)$ and $w \notin L(\eta)$. If $w \notin V(G)$, then $w$ is adjacent in $G^\times$ to a vertex $z$ of $\mathcal{R}(P)$ where $(u,z) \in E(G)$. By Invariant ($\mathcal{I}$\ref{item: inv: interface}) applied to iteration $\eta$, we have $w \notin L(\eta)$. Therefore, $L(\eta) \subsetneq L(\eta + 1)$.
\end{proof}

\subsection{Cops in \texorpdfstring{$\mathbb{C}$}{C}-Configuration}

Suppose that at the end of $\eta \geq 1$ iterations, a cycle $D \subseteq G^\times$ is guarded in $\mathbb{C}$-Configuration by cops $\mathcal{U}_1$ and $\mathcal{U}_2$. Hence, $\mathcal{R} := \mathcal{R}(D)$ is on one side of the cycle $D$; by symmetry, we may assume that it is in the interior of $D$. Depending upon how many vertices of $P_1$ and $P_2$ have adjacency in $\mathcal{R}$, we consider different cases.

\subsubsection*{Case 1: A single vertex of $D$ has a neighbour in $\mathcal{R}$}

In this case, we proceed exactly as we did for Case 1 of $\mathbb{P}$-Configuration. (The only minor change is that we substitute $S_D(v)$ for $S_P(v)$ everywhere.)

\subsubsection*{Case 2: A single vertex of $P_1$ and of $P_2$ have a neighbour in $\mathcal{R}$}

Let $a \in P_1$ and $b \in P_2$ be the two vertices that have a neighbour in $\mathcal{R}$. Let $a_1$ and $b_1$ be their (not necessarily distinct) neighbours in $\mathcal{R}$. Depending upon whether $a$ and $b$ are vertices of $G$ or not, we consider different subcases (Figure \ref{fig: case 2 of cycle}).

\begin{figure}
     \centering
     \begin{subfigure}[b]{0.3\textwidth}
         \centering
         \includegraphics[scale = 0.75, page = 1]{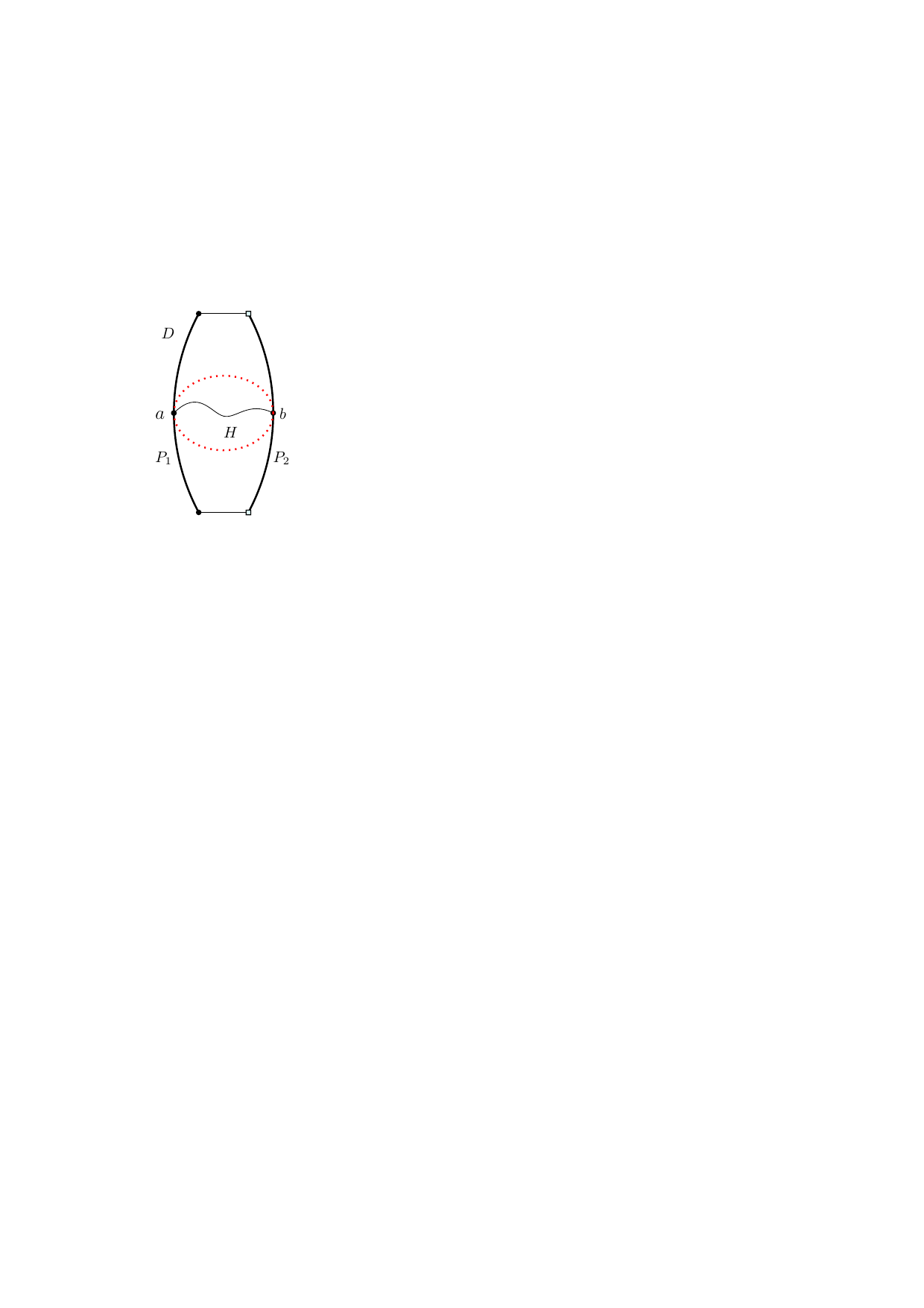}
         \subcaption{Case 2.1}
         \label{fig: case2.1}
     \end{subfigure}
     \hfill
     \begin{subfigure}[b]{0.3\textwidth}
         \centering
         \includegraphics[scale = 0.75, page = 2]{Cycle_2.pdf}
         \subcaption{Case 2.2}
         \label{fig: case2.2}
     \end{subfigure}
     \hfill
     \begin{subfigure}[b]{0.3\textwidth}
         \centering
         \includegraphics[scale = 0.75, page = 3]{Cycle_2.pdf}
         \subcaption{Case 2.3}
         \label{fig: case2.3}
     \end{subfigure}
    \caption{An illustration of Case 2 of $\mathbb{C}$-Configuration.}
    \label{fig: case 2 of cycle}
\end{figure}

\paragraph*{Case 2.1: $a,b \in V(G)$} If $a, b \in V(G)$ are the only vertices of $D$ that have a neighbour in the robber territory, then $S(a) = \{(a,x): x \in \mathcal{R}\}$ and $S(b) = \{(b,x): x \in \mathcal{R}\}$. In other words, all edges incident with $a$ (resp. $b$) and some vertex of the robber territory already belong to $S_D(a)$ (resp. $S_D(b)$). Let $H := \mathcal{R} \cup S_D(a) \cup S_D(b)$. By Corollary \ref{cor: five cops guard a path with x-crossings}, we can use five cops of $\mathcal{U}_3$ to guard a shortest $(a,b)$-path $P$ in $H$ (Figure \ref{fig: case2.1}). As a result, $a$ and $b$ are both guarded in $G$, and $\overline{P} \subseteq G^\times$ is guarded in $\mathbb{P}$-Configuration. Let $Q = \overline{P}$, and we free $\mathcal{U}_1 \cup \mathcal{U}_2$.

\paragraph*{Case 2.2: $a \in V(G)$ and $b \notin V(G)$ (or vice versa)} Suppose that $a \in V(G)$ and $b \notin V(G)$. Let $\{f_1,f_2\}$ be the crossing at $b$, and let $\{b_1,b_2\}$ be two consecutive endpoints of the crossing with $b_1 \in \mathcal{R}$. Note that $\{(a,x): x \in \mathcal{R}\} \subseteq S_D(a) \cup \{f_1,f_2\}$ since $a$ and $b$ are the only two vertices of $D$ with a neighbour in $\mathcal{R}$. Let $H := \mathcal{R} \cup S_D(a) \cup \{f_1,f_2\}$ (Figure \ref{fig: case2.2}). We first place a cop of $\mathcal{U}_3 \setminus \mathcal{U}_3^*$ at $b_2$. This cop will remain stationary. Since $\mathcal{R} \cup S_D(a)$ is 1-plane without $\times$-crossings (Observation \ref{obs: adding a vertex to a 1-plane graph with no x-crossings}), $H$ has at most one $\times$-crossing, possibly $\{f_1,f_2\}$. By Corollary \ref{cor: stationary cops on shortest paths}, we can guard a shortest $(a,b_1)$-path $P$ in $H$ using five cops and the cop at $b_2$. (As shown previously, all edges incident with $a$ and some vertex in $\mathcal{R}$ already belong to $H$; hence $a$ is guarded.) If $b \in P^\times$, set $Q := \overline{P}$; else set $Q := \overline{P} \cup (b_1,b)$. Since both $b_1$ and $b_2$ are guarded, so is $b$. This in turn implies that $Q$ is guarded, and we free $\mathcal{U}_1 \cup \mathcal{U}_2$. (The case where $a \notin V(G)$ and $b \in V(G)$ is symmetric.)

\paragraph*{Case 2.3: $a \notin V(G)$ and $b \notin V(G)$} Let $\{e_1,e_2\}$ and $\{f_1,f_2\}$ be the crossings at $a$ and $b$ respectively. Let $a_1,a_2$ and $b_1,b_2$ be two consecutive endpoints of the crossings at $a$ and $b$ respectively with both $a_1,b_1 \in \mathcal{R}$ (Figure \ref{fig: case2.3}). We place one cop of $\mathcal{U}_3 \setminus \mathcal{U}_3^*$ at $a_2$ and one more cop of $\mathcal{U}_3 \setminus \mathcal{U}_3^*$ at $b_2$, both of which remain stationary. Let $H := \mathcal{R} \cup \{e_1,e_2,f_1,f_2\}$. As $\mathcal{R}$ is 1-plane without $\times$-crossings (Observation \ref{obs: R is 1-plane without x-crossings}), $H$ has at most two $\times$-crossings, possibly $\{e_1,e_2\}$ and $\{f_1,f_2\}$. By Corollary \ref{cor: stationary cops on shortest paths}, we can guard a shortest $(a_1,b_1)$-path $P$ in $H$ using five cops and the two cops at $a_2$ and $b_2$. Both $a$ and $b$ are guarded because $a_1,a_2$ and $b_1, b_2$ are guarded. If $a \notin P^\times$, add $(a_1,a)$ to $\overline{P}$. Likewise, if $b \notin P^\times$, add $(b_1,b)$ to $\overline{P}$. Let $Q$ be this new path, and we free $\mathcal{U}_1 \cup \mathcal{U}_2$.

\medskip
After $Q$ has been guarded and $\mathcal{U}_1 \cup \mathcal{U}_2$ freed, we complete the iteration $\eta + 1$. We set $L(\eta + 1) := L(\eta) \cup Q$. We show that all the invariants hold. As before, ($\mathcal{I}$\ref{item: inv: path or cycle}) and ($\mathcal{I}$\ref{item: inv: one cop free}) hold trivially. Since the only vertices of $D$ with a neighbour in $\mathcal{R}(D)$ are $a$ and $b$ which remains guarded, $\mathcal{R}(L(\eta + 1)) = \mathcal{R}(D)$ and no vertex of $L(\eta+1) \setminus D$ has a neighbour in $\mathcal{R}(D)$. Hence ($\mathcal{I}$\ref{item: inv: interface}) holds. We will show that $L(\eta) \subsetneq L(\eta + 1)$. Let $w$ be the vertex of $Q$ adjacent to $a$. If $w \in V(G)$, then $w \in \mathcal{R}(D)$ and $w \notin L(\eta)$. If $w \notin V(G)$, then $w$ is adjacent in $G^\times$ to a vertex $z \in \mathcal{R}(D)$ where $(a,z) \in E(G)$. By Invariant ($\mathcal{I}$\ref{item: inv: interface}) applied to iteration $\eta$, we have $w \notin L(\eta)$. Therefore, $L(\eta) \subsetneq L(\eta + 1)$, and since $Q \subseteq L(\eta + 1)$, ($\mathcal{I}$\ref{item: inv: increase in guarded subgraph}) holds.

\subsubsection*{Case 3: $P_1$ or $P_2$ has more than one vertex with a neighbour in $\mathcal{R}$}

By symmetry, assume that $P_1$ has more than one vertex with a neighbour in $\mathcal{R}$. Enumerate the vertices of $P_1$, and let $u$ and $v$ be the first and last vertices of $P_1$ that have a neighbour in $\mathcal{R}$. Let $a$ and $b$ be their (not necessarily distinct) neighbours in $\mathcal{R}$. Just as in Case 2 of Section \ref{subsec: cops in Config-P}, we use $\mathcal{U}_3$ to guard a path $Q$ in a graph $H$, where the description of $H$ varies depending upon whether $u$ and $v$ are $G$-vertices or not. (A minor difference from Case 2 is that we replace $S_P(u)$ and $S_P(v)$ with $S_D(u)$ and $S_D(v)$ respectively.) 

Let $P_3$ be the path derived from $Q$ as follows. For every edge $e \in E(Q) \cap E(G)$ that is crossed by another edge of $G$ incident with some vertex of $D$, we subdivide the edge $e$ once to add the dummy vertex. Having done this for all such edges, we get a path $P_3$ that is a subgraph of $G^\times$ (Figure \ref{fig: cycle case 3}). Similar to Claim \ref{claim: P_1 and P_2 are vertex disjoint}, it is easy to show that $P_3$ is non-empty and vertex-disjoint from $D$. We use this fact to prove Claim \ref{claim: G-vertices of P_3 are guarded}.

\begin{figure}
    \centering
    \includegraphics{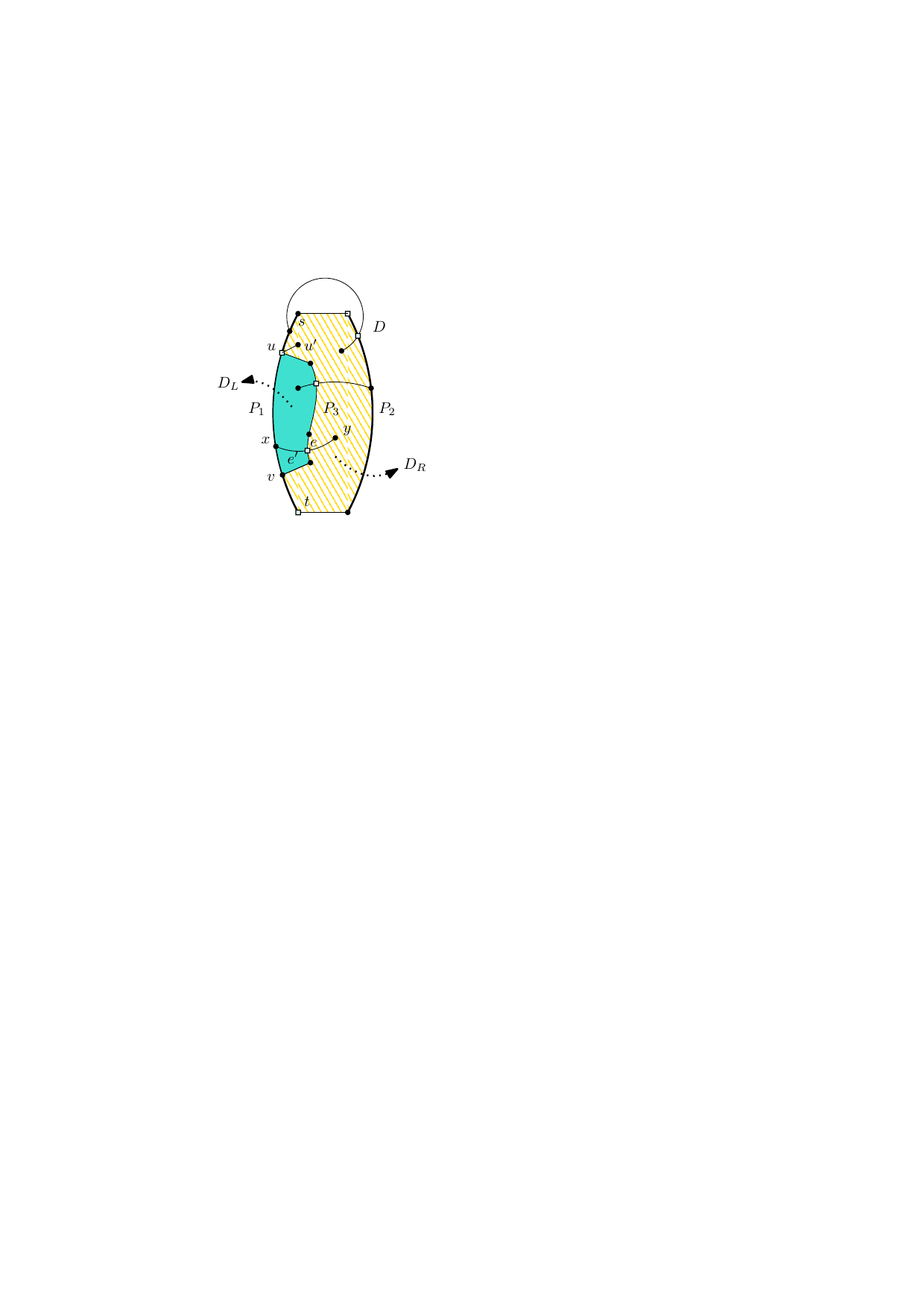}
    \caption{An illustration for Case 3 of $\mathbb{C}$-Configuration.}
    \label{fig: cycle case 3}
\end{figure}

\begin{claim}\label{claim: G-vertices of P_3 are guarded}
    All $G$-vertices of $P_3$ are guarded by $\mathcal{U}_3^*$.
\end{claim}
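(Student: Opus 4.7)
The plan is to reduce the claim to the fact, already established in the construction, that $Q$ is guarded in $\mathbb{P}$-Configuration by $\mathcal{U}_3$. By Definition \ref{def: path config}, this means $\mathcal{U}_3^*$ guards every $G$-vertex of $Q$. So it suffices to show that $V(P_3) \cap V(G) = V(Q) \cap V(G)$, i.e., passing from $Q$ to $P_3$ does not introduce any new $G$-vertices.

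To establish this set-equality, I would recall how $P_3$ is built from $Q$: for every edge $e \in E(Q) \cap E(G)$ that is crossed in $G^\times$ by some edge $e'$ of $G$ incident to a vertex of $D$, we subdivide $e$ once at the crossing point to insert the corresponding dummy vertex of $G^\times$. Because such a crossing edge $e'$ belongs to $E(\mathcal{C}(D))$ and therefore is not present in $H \subseteq \mathcal{R} \cup S_D(u) \cup S_D(v)$, the crossing at this dummy vertex is invisible in $H$, explaining why the subdivision was needed in the first place. In particular, the operation that produces $P_3$ from $Q$ only inserts dummy vertices; it never inserts a vertex of $V(G)$.

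With that observation in hand, any $G$-vertex of $P_3$ must already be a $G$-vertex of $Q$. By the analogue of Case 2 of $\mathbb{P}$-Configuration used here (applying Corollary \ref{cor: main corollary of shortest path}, Observation \ref{obs: adding a vertex to a 1-plane graph with no x-crossings}, or Corollary \ref{cor: stationary cops on shortest paths} as in the three subcases), $\mathcal{U}_3$ guards $Q$ in $\mathbb{P}$-Configuration, so $\mathcal{U}_3^*$ guards every $G$-vertex of $Q$, and hence every $G$-vertex of $P_3$. No substantive obstacle is expected; the whole content of the claim is the bookkeeping observation that subdivision at a missing-from-$H$ crossing point inserts only a dummy vertex, so the guarantees already provided by $\mathbb{P}$-Configuration on $Q$ transfer verbatim to $P_3$.
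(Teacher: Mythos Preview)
Your reduction to the set equality $V(P_3)\cap V(G)=V(Q)\cap V(G)$ is correct, but it is only the easy half of the claim. You have overlooked the main difficulty: the path $Q$ is guarded by $\mathcal{U}_3$ \emph{in the subgraph $H$}, not in $G$. This is how the construction is phrased just before the claim (``we use $\mathcal{U}_3$ to guard a path $Q$ in a graph $H$''), and it is all that Corollaries~\ref{cor: main corollary of shortest path}, \ref{cor: stationary cops on shortest paths}, \ref{cor: five cops guard a path with x-crossings} deliver, since the underlying Aigner--Fromme strategy tracks the robber using distances in $H$. So what your argument actually establishes is only that $\mathcal{U}_3^*$ guards every $G$-vertex of $P_3$ \emph{in $H$}. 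The claim asserts guarding without qualification, i.e.\ in $G$.

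The paper's proof supplies exactly this missing upgrade. Since $P_3$ is vertex-disjoint from $D$ and $G$ is $1$-plane, no edge $e\in E(G)$ with one endpoint in $\mathcal{R}=\mathcal{R}(D)$ and the other at a $G$-vertex of $P_3$ can have $e^\times$ intersect $D$; hence every such edge already lies in $H$, and the robber cannot reach a $G$-vertex of $P_3$ by a move outside $H$. This is what lets the ``in $H$'' guarantee be promoted to ``in $G$''. Your write-up omits this step and explicitly asserts that ``no substantive obstacle is expected'', but this topological argument is the actual content of the claim---the bookkeeping about subdivisions inserting only dummy vertices is the trivial part.
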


\begin{proof}
Note that $P_3$ is guarded in $\mathbb{P}$-Configuration in the graph $H$. Hence, all $G$-vertices of $P_3$ are guarded in $H$ by $\mathcal{U}_3^*$. We need to show that they are also guarded in $G$. As $P_3$ is vertex disjoint from $D$, and due to 1-planarity, there can be no edge $e \in E(G)$ with one endpoint in $\mathcal{R}$ and the other endpoint in $P_3$ such that $e^\times$ intersects $D$. Therefore, all $G$-vertices of $P_3$ are guarded by $\mathcal{U}_3^*$ in $G$. 
\end{proof}

As there may be dummy vertices on $P_3$ that are not dummy vertices of $Q$, the cops of $\mathcal{U}_3$ may not guard all crossing points of $P_3$. So, we deploy two hitherto unused cops of $\mathcal{U}_3$ so that one is in lockstep with $\mathcal{U}_1^*$ and the other is in lockstep with $\mathcal{U}_2^*$. In Claim \ref{claim: lockstep tertiary cop cycle case}, we prove that this strategy works.

\begin{claim}\label{claim: lockstep tertiary cop cycle case}
    All dummy vertices of $P_3$ are guarded by $\mathcal{U}_3$.
\end{claim}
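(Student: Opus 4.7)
The plan is to mirror the structure of Claim~\ref{claim: lockstep tertiary cop path case} (the $\mathbb{P}$-Configuration analogue), adapting it to the cycle setting. Let $d$ be an arbitrary dummy vertex of $P_3$ that does not appear in $Q$. By the construction of $P_3$, such a $d$ arises from a crossing $\{e, e'\}$ where $e \in E(Q) \cap E(G)$ and $e'$ is an edge of $G$ incident to some vertex $x \in V(D)$. The two endpoints of $e$ become $G$-vertices of $P_3$, and by Claim~\ref{claim: G-vertices of P_3 are guarded} they are guarded by $\mathcal{U}_3^*$. One endpoint of $e$ together with $x$ form a pair of consecutive endpoints of the crossing, so by Observation~\ref{obs: guarding a crossing point} it suffices to show that $x$ is guarded by some cop of $\mathcal{U}_3$.

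The next step is to transfer the guarding guarantee for $x$ from $\mathcal{U}_1^* \cup \mathcal{U}_2^*$ to $\mathcal{U}_3$. Because $D$ is guarded in $\mathbb{C}$-Configuration, $x$ lies on either $P_1$ or $P_2$. If $x \in P_1$, then $\mathcal{U}_1^*$ guards $x$ in $G \setminus E(\mathcal{C}(P_2))$; if $x \in P_2$, then $\mathcal{U}_2^*$ guards $x$ in $G \setminus E(\mathcal{C}(P_1))$. The two extra cops of $\mathcal{U}_3$ were placed in lockstep with $\mathcal{U}_1^*$ and $\mathcal{U}_2^*$ respectively, so at every round after they are deployed, one of them occupies the same vertex as the cop guarding $x$ and therefore inherits exactly that guarding capability.

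The main obstacle is to argue that this inherited guarantee actually suffices to block the robber from crossing $d$ via the edge $e'$. Concretely, suppose $x \in P_1$; then I need to verify that $e' \notin E(\mathcal{C}(P_2))$, so that the lockstep cop copying $\mathcal{U}_1^*$ captures the robber on his attempted move along $e'$. Since $G$ is $1$-planar, the only crossing on $e'$ is the one at $d$, and by the argument preceding Claim~\ref{claim: G-vertices of P_3 are guarded} the path $P_3$ (and hence $d$) is vertex-disjoint from $D$, so $e'^\times \cap P_2 = \emptyset$. Moreover, $e'$ is incident to $x \in P_1$, not to $P_2$, so $e' \notin E(\mathcal{C}(P_2))$. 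A symmetric argument applies when $x \in P_2$. Any other way the robber could attempt to sweep across $d$, namely by using $e$ itself, is already blocked since both endpoints of $e$ are guarded by $\mathcal{U}_3^*$. Combining these observations with Observation~\ref{obs: guarding a crossing point} yields that $d$ is guarded by $\mathcal{U}_3$, completing the proof.
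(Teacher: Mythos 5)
Your proof follows the same route as the paper's: restrict to dummy vertices $d$ of $P_3$ not on $Q$, write the crossing at $d$ as $\{e,e'\}$ with $e \in E(Q) \cap E(G)$ and $e' = (x,y)$ incident to a vertex $x$ of $D$, invoke Claim \ref{claim: G-vertices of P_3 are guarded} for the endpoints of $e$, and transfer the guarding of $x$ to the lockstep cops of $\mathcal{U}_3$. However, there is a gap in the step where you conclude $e' \notin E(\mathcal{C}(P_2))$. You argue that $(e')^\times \cap P_2 = \emptyset$ because $d$ is vertex-disjoint from $D$ and because ``$e'$ is incident to $x \in P_1$, not to $P_2$.'' This accounts only for the crossing point $d$ and the endpoint $x$; it ignores the other endpoint $y$ of $e'$. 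Nothing in the construction prevents $y$ from lying on $P_2$: the edge $e'$ leaves $x \in P_1$, dips into the interior of $D$ to cross $e$ at $d$, and may perfectly well terminate at a $G$-vertex of $P_2$. In that case $(e')^\times \cap P_2 = \{y\} \neq \emptyset$, so $e' \in E(\mathcal{C}(P_2))$ by Definition \ref{def: cop territory}, and the guarantee you are importing from $\mathcal{U}_1^*$ --- which only guards $x$ in $G \setminus E(\mathcal{C}(P_2))$ --- does not cover a robber move along $e'$. So your assertion is false exactly in the one case that needs care.

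The conclusion still holds, but that case needs its own short argument, which is what the paper supplies: by 1-planarity, $(e')^\times$ can meet $P_2$ only at the $G$-vertex $y$ itself (the crossing point $d$ lies on $P_3$, disjoint from $D$), and since $P_2$ is guarded while $D$ is held in $\mathbb{C}$-Configuration, the robber cannot occupy $y$ without being captured, so the move from $y$ to $x$ never occurs. Adding this dichotomy --- either $(e')^\times \cap P_2 = \emptyset$, where your argument goes through, or $(e')^\times \cap P_2 = \{y\}$, where the robber is caught before reaching $y$ --- closes the gap; the rest of your write-up matches the paper's proof.
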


\begin{proof}
    We have already seen that all dummy vertices of $Q$ are guarded, hence we only consider dummy vertices of $P_3$ that do not belong to $Q$. Let $d$ be one such dummy vertex. Then the crossing at $d$ is of the form $\{e,e'\}$ where $e \in Q$ and $e' = (x,y)$ with one endpoint, say $x$, being incident with $D$ (Figure \ref{fig: cycle case 3}). By symmetry, assume that $x \in P_1$. By Claim \ref{claim: G-vertices of P_3 are guarded}, the endpoints of $e$ are already guarded. To show that $d$ is guarded, it is sufficient to show that if the robber moves from $y$ to $x$, then it is captured by $\mathcal{U}_1^*$. (This is sufficient because there is a cop of $\mathcal{U}_3$ mimicking $\mathcal{U}_1^*$.) As $D$ is guarded in $\mathbb{C}$-Configuration, the cop $\mathcal{U}_1^*$ guards $x$ in $G \setminus E(\mathcal{C}(P_2))$. As $G$ is 1-plane, either $(e')^\times$ does not intersect $P_2$ or intersects $P_2$ at the $G$-vertex $y$. In the former case, $e' \notin E(\mathcal{C}(P_2))$ and in the latter case, the robber cannot land on $y$ without being captured. Therefore, the robber cannot move from $y$ to $x$ without being captured by $\mathcal{U}_1^*$. Hence, the claim is established. 
\end{proof}

Let $s$ and $t$ be the first and last vertices of $P_1$ (based on the same enumeration chosen at the start of this case). Let $P_1^-$ be the sub-path of $P_1$ from $u$ and $v$. Let $P_3^+$ be the path formed by the union of $P_3$ and the two sub-paths of $P_1$ from $s$ to $u$ and from $v$ to $t$. Let $D_L = [P_1^-, P_3]$ and $D_R = [P_3^+, P_2]$ (Figure \ref{fig: cycle case 3}). After $P_3$ has been guarded, the robber is confined either to the interior of $D_L$ or to the interior of $D_R$. We first consider the easy case when the robber is in the interior of $D_L$.

\paragraph{Case (a): Robber is in the interior of $D_L$} If the robber is in the interior of $D_L$, then we free $\mathcal{U}_2$. We will show that $\mathcal{U}_1$ and $\mathcal{U}_3$ guard $D_L$ in $\mathbb{C}$-Configuration. The conditions for $\mathbb{C}$-Configuration are already satisfied for $P_3$ due to Claims \ref{claim: G-vertices of P_3 are guarded} and \ref{claim: lockstep tertiary cop cycle case}. We will therefore focus on $P_1^-$. Since $D$ was guarded in $\mathbb{C}$-Configuration, the crossing points of $P_1^-$ are guarded by the cops of $\mathcal{U}_1$, and all $G$-vertices of $P_1^-$ are guarded by $\mathcal{U}_1^*$ in $G \setminus E(\mathcal{C}(P_2))$. We must now show the latter part also holds for $G \setminus E(\mathcal{C}(P_3))$. As $P_2$ lies to the exterior of $D_L$, and due to 1-planarity, there can be no edge $e \in E(G)$ with one endpoint in $\mathcal{R}(D_L)$ and the other endpoint in $P_1^-$ such that $e^\times$ intersects $P_2$ but not $P_3$. Therefore, $\mathcal{U}_1^*$ guards $G$-vertices of $P_1^-$ in $G \setminus E(\mathcal{C}(P_3))$. Hence, $D_L$ is guarded by $\mathcal{U}_1$ and $\mathcal{U}_3$ in $\mathbb{C}$-Configuration.

\paragraph{Case (b): Robber is in the interior of $D_R$} If the robber is in the interior of $D_R$, then we free $\mathcal{U}_1$. We will show that $\mathcal{U}_2$ and $\mathcal{U}_3$ guard $D_R$ in $\mathbb{C}$-Configuration. First, we show that the conditions for $\mathbb{C}$-Configuration are satisfied for $P_2$. Since $D$ was guarded in $\mathbb{C}$-Configuration, the crossing points of $P_2$ are guarded by $\mathcal{U}_2$. Moreover, all $G$-vertices of $P_2$ are guarded by $\mathcal{U}_2^*$ in $G \setminus E(\mathcal{C}(P_1))$. We now show that this also holds for $G \setminus E(\mathcal{C}(P_3^+))$. As $P_1^-$ lies to the exterior of $D_R$, and due to 1-planarity, there can be no edge $e \in E(G)$ with one endpoint in $\mathcal{R}(D_R)$ and the other endpoint on $P_2$ such that $e^\times$ intersects $P_1$ but not $P_3^+$. Therefore, $\mathcal{U}_2^*$ guards $G$-vertices of $P_2$ in $G \setminus E(\mathcal{C}(P_3^+))$. We are left with showing that the conditions for $\mathbb{C}$-Configuration are satisfied for $P_3^+$.

\begin{claim}\label{claim: robber in D_R}
    $\mathcal{U}_3$ guards crossing points of $P_3^+$ and $\mathcal{U}_3^*$ guards all $G$-vertices of $P_3^+$ in $G\setminus \mathcal{C}(P_2)$. 
\end{claim}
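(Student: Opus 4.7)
The plan is to verify both conditions of $\mathbb{C}$-Configuration on $P_3^+$ by decomposing $P_3^+ = P_1^{sc} \cup P_3 \cup P_1^{vt}$ and analysing each piece separately, using the cops of $\mathcal{U}_3$ that remain after $\mathcal{U}_1$ is freed: the five shadowing cops on $P_3$ together with the two lockstep cops originally tracking $\mathcal{U}_1^*$ and $\mathcal{U}_2^*$.

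For the $G$-vertex condition, the $G$-vertices of $P_3$ are guarded by $\mathcal{U}_3^*$ throughout $G$ by Claim \ref{claim: G-vertices of P_3 are guarded}, hence in $G \setminus \mathcal{C}(P_2)$. The remaining $G$-vertices of $P_3^+$ lie on $P_1^{sc}$ or $P_1^{vt}$; by the choice of $u$ and $v$ as the first and last vertices of $P_1$ with a neighbour in $\mathcal{R}(D)$, any such vertex $w$ other than $u$ or $v$ has no neighbour in $\mathcal{R}(D)$, and because $\mathcal{R}(D_R) \subseteq \mathcal{R}(D)$ no neighbour in $\mathcal{R}(D_R)$ either. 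Since the robber's component in $G \setminus \mathcal{C}(P_2)$ equals $\mathcal{R}(D_R)$, he cannot reach $w$ by a single edge and $w$ is vacuously guarded. The vertices $u, v$ themselves lie on $P_3$ and are covered by $\mathcal{U}_3^*$.

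For the crossing-point condition, crossings on $P_3$ are handled exactly as in Claim \ref{claim: lockstep tertiary cop cycle case}. A crossing $d \in P_1^{sc} \setminus P_3$ (the $P_1^{vt}$ case is symmetric) arises from $e = (u',v') \in E(P_1)$ crossed by $e' = (a,b) \in E(G) \setminus E(P_3)$, and since $G \in \hat{\mathcal{G}}$ a kite edge exists among $\{(u',a),(u',b),(v',a),(v',b)\}$. At least one of $u', v'$ differs from $u$; using that this endpoint has no neighbour in $\mathcal{R}(D)$, any kite edge incident to it forces the corresponding endpoint of $e'$ to lie outside $\mathcal{R}(D) \supseteq \mathcal{R}(D_R)$. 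The robber can then use the crossing only by moving from the in-territory endpoint to this outside endpoint, landing on a vertex adjacent via the kite edge to a vertex of $P_1$; the lockstep cop of $\mathcal{U}_3$, which now shadows $P_1$ autonomously via Aigner--Fromme once $\mathcal{U}_1$ is freed, projects onto that $P_1$-vertex and captures on the next cops' turn through the kite edge.

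The main obstacle is that after $\mathcal{U}_1$ is freed only a single shadow cop of $\mathcal{U}_3$ remains on $P_1$, whereas guarding all dummy vertices of a shortest path from scratch via Lemma \ref{lem: five cops guard a path} nominally requires five cops. The resolution is that $P_1^{sc}$ and $P_1^{vt}$ need only be guarded against robbers confined to $\mathcal{R}(D_R)$ (up to $G \setminus \mathcal{C}(P_2)$), and the asymmetry from the choice of $u, v$ constrains every relevant crossing so that the reachable endpoint is within one kite-edge step of the $P_1$-shadow cop's projection, enabling interception without the additional four cops that would be required for a generic shortest-path guard.
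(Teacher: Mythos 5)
Your decomposition of $P_3^+$ into $P_3$ and the two end-segments of $P_1$ matches the paper's, and your treatment of the interior vertices of those segments (no neighbour in $\mathcal{R}(D)$, hence vacuously guarded up to $\mathcal{C}(P_2)$) is essentially the paper's argument. But there is a genuine gap at the vertices $u$ and $v$ themselves. You dispose of them by asserting that they ``lie on $P_3$ and are covered by $\mathcal{U}_3^*$.'' This is false: $P_3$ is derived from the path $Q$ of Case 2 of Section \ref{subsec: cops in Config-P}, and there $Q$ is explicitly defined as $\overline{Q'}\setminus\{u,v\}$ (or $\overline{Q'}\setminus\{u\}$, or a path between the neighbours $a,b$), so $u$ and $v$ are never vertices of $P_3$ --- and when $u$ is a dummy vertex it cannot even appear on a path of $H\subseteq G$. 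Yet $u,v\in P_3^+$ by construction, and they are exactly the vertices of $P_3^+$ that \emph{do} have neighbours in $\mathcal{R}(D_R)$, so they are the crux of the claim. The paper handles them separately: if $u$ is a $G$-vertex, $\mathcal{U}_3^*$ guards it because the guarded shortest path $Q'$ in $H$ contains $u$ even though $Q$ does not; if $u$ is a dummy vertex, one argues that among its four neighbours two lie on $D$ and one on $P_3$, so at most one neighbour $u'$ lies in $\mathcal{R}(D_R)$, and the combinatorial order of the half-edges at the crossing forces the endpoint opposite $u'$ onto $P_1^-$ --- so a robber traversing the crossing lands \emph{on} $P_1^-$ and is captured by the cop of $\mathcal{U}_3$ imitating $\mathcal{U}_1^*$. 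Your proposal contains no substitute for this argument.

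A second, related problem is your capture mechanism for crossings on $P_1^{su}\setminus\{u\}$: you have the robber land on a vertex \emph{adjacent} to $P_1$ via a kite edge and claim the single lockstep shadow of $\mathcal{U}_1^*$ intercepts him there. A lone Aigner--Fromme cop only guarantees capture when the robber lands on a path vertex; catching him one kite edge away is precisely what requires the four auxiliary cops occupying nearby path vertices in Lemma \ref{lem: five cops guard a path}, and those cops are not available here. Your closing paragraph acknowledges this tension but does not resolve it. Fortunately the detour is unnecessary for those interior crossing points: since such a dummy vertex $w\neq u,v$ has no neighbour in $\mathcal{R}(D)\supseteq\mathcal{R}(D_R)$, none of the four endpoints of its crossing is reachable by the robber, so $w$ is trivially guarded --- but the same shortcut does not apply to $u$ and $v$, which is where the real work lies.
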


\begin{proof}    
By Claim \ref{claim: G-vertices of P_3 are guarded}, all $G$-vertices of $P_3$ are guarded by $\mathcal{U}_3^*$, and by Claim \ref{claim: lockstep tertiary cop cycle case}, all crossing points on $P_3$ are guarded by the cops of $\mathcal{U}_3$. Let $w$ be any vertex of $P_3^+ \setminus (P_3 \cup \{u,v\})$. As $u$ and $v$ are the first and last vertices of $P_1$ with a neighbour in $\mathcal{R}$, vertex $w$ is not adjacent to any vertex in $\mathcal{R}$. Therefore, if $w \in V(G)$, then any edge of $G$ incident with $\mathcal{R}(D_R)$ and $w$ must belong to $\mathcal{C}(P_2)$. And if $w$ is a dummy vertex, then there is no edge of $G^\times$ incident with $w$ and $\mathcal{R}(D_R)$. In the former scenario, $w$ is trivially guarded in $G \setminus \mathcal{C}(P_2)$, and in the latter scenario, $w$ is trivially guarded in $G$.

The above arguments show that we only need to restrict attention to $u$ and $v$. In what follows, we only consider the vertex $u$ since a symmetric argument will work for $v$. If $u$ is a $G$-vertex, then $P_3$ is derived from a path $Q' \subseteq H$ that contains $u$ (see Cases 2.1 and 2.2 of Section \ref{subsec: cops in Config-P}). Hence, the cop $\mathcal{U}_3^*$ guards $u$ in the graph $H$. As $u$ and $v$ are the first and last vertices of $P_1$ to have a neighbour in $\mathcal{R}(D)$, this further implies that $\mathcal{U}_3^*$ guards $u$ in $G \setminus E(\mathcal{C}(P_2))$. Next, consider the case when $u$ is a dummy vertex. All dummy vertices have exactly four neighbours, each corresponding to an endpoint at its crossing. Of the four neighbours of $u$, two of them belong to $D$ (one of which is on $P_1^-$ while the other can be on $P_1$ or $P_2$ depending upon the location of $u$ on $P_1$), and one belongs to $P_3$ (since $P_3$ is not empty). Therefore, there is at most one neighbour of $u$, say $u'$, that can belong to $\mathcal{R}(D_R)$. The endpoint of the crossing opposite to $u'$ must lie on $P_1^-$ due to the combinatorial embedding of the crossing (Figure \ref{fig: cycle case 3}). Put differently, the four half-edges at the crossing point alternate from belonging to one edge of the crossing to the other edge of the crossing. Hence, a robber that moves from $u'$ to its opposite endpoint lands on a $G$-vertex of $P_1^-$. The robber is then caught by the cop of $\mathcal{U}_3$ that imitates $\mathcal{U}_1^*$. 
\end{proof}


After $D_L$ or $D_R$ has been guarded and the cops of $\mathcal{U}_1$ or $\mathcal{U}_2$ have been freed, we end the current iteration $\eta + 1$. We set $L(\eta + 1) := L(\eta) \cup P_3$. Claim \ref{claim: invariants hold cycle-3} shows that all invariants hold.

\begin{claim}\label{claim: invariants hold cycle-3}
    At the end of iteration $\eta+1$, all the four invariants hold.
\end{claim}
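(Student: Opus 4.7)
My plan is to verify the four invariants $\mathcal{I}\ref{item: inv: path or cycle}$–$\mathcal{I}\ref{item: inv: one cop free}$ separately for the two subcases, namely Case (a) where the robber lies interior to $D_L$ and we set $J := D_L$, and Case (b) where the robber lies interior to $D_R$ and we set $J := D_R$. In both subcases $L(\eta+1) = L(\eta) \cup P_3$, so the invariants amount to statements about how $P_3$ augments $L(\eta)$ and how the newly guarded cycle sits between $L(\eta+1)$ and the robber territory.

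The easiest invariants to discharge are $\mathcal{I}\ref{item: inv: path or cycle}$ and $\mathcal{I}\ref{item: inv: one cop free}$. The former was established in Case (a) and Case (b) directly: we showed that $\mathcal{U}_1 \cup \mathcal{U}_3$ guards $D_L$ in $\mathbb{C}$-Configuration and $\mathcal{U}_2 \cup \mathcal{U}_3$ guards $D_R$ in $\mathbb{C}$-Configuration. The latter follows because we explicitly freed $\mathcal{U}_2$ in Case (a) and $\mathcal{U}_1$ in Case (b).

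For $\mathcal{I}\ref{item: inv: interface}$, I would argue as follows. Since $u$ and $v$ are by construction the \emph{first} and \emph{last} vertices of $P_1$ with a neighbour in $\mathcal{R}(D)$, every vertex of $P_1$ with a neighbour in $\mathcal{R}(D)$ lies on the subpath $P_1^-$, and in particular on $D_L \cap D_R$. Combined with the fact that $J \subseteq D \cup P_3$ remains guarded throughout iteration $\eta+1$ (so the robber can never escape from $\mathcal{R}(D)$ through $J$), this shows that the robber territory does not grow; hence $\mathcal{R}(J) \subseteq \mathcal{R}(L(\eta))$. Conversely, once $P_3$ is set up, the robber is trapped on the same side of $D$ it occupied at the start of iteration $\eta+1$, and $P_3$ further restricts it to one of $\mathcal{R}(D_L)$ or $\mathcal{R}(D_R)$, so $\mathcal{R}(J) = \mathcal{R}(L(\eta+1))$. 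For the second part of $\mathcal{I}\ref{item: inv: interface}$, any vertex $w \in L(\eta+1) \setminus J$ either lies in $L(\eta) \setminus D$, in which case applying $\mathcal{I}\ref{item: inv: interface}$ at step $\eta$ gives that $w$ has no neighbour in $\mathcal{R}(D) \supseteq \mathcal{R}(J)$, or $w$ lies on $D \setminus J$, in which case $w$ lies on the side of $D$ opposite to the robber and thus also has no neighbour in $\mathcal{R}(J)$.

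For $\mathcal{I}\ref{item: inv: increase in guarded subgraph}$, clearly $J \subseteq L(\eta+1)$. To show $L(\eta) \subsetneq L(\eta+1)$ I would exhibit a vertex of $P_3$ not already in $L(\eta)$. Take the vertex $w$ of $P_3$ adjacent to $u$; exactly as in Case~2 of Section~\ref{subsec: cops in Config-P}, either $w \in V(G)$ and $w \in \mathcal{R}(D)$, or $w$ is a dummy vertex on an edge $(u,z) \in E(G)$ with $z \in \mathcal{R}(D)$. In either case, the old invariant $\mathcal{I}\ref{item: inv: interface}$ applied at iteration $\eta$ forces $w \notin L(\eta)$. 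The main subtlety I expect is the middle step: being careful that no new neighbour of $\mathcal{R}(J)$ is hiding in $L(\eta+1) \setminus J$, which requires using both the planarity constraint that $u,v$ are the extremal adjacent vertices of $P_1$ and the fact that $P_3$ was constructed to separate $\mathcal{R}(D)$ into $\mathcal{R}(D_L)$ and $\mathcal{R}(D_R)$.
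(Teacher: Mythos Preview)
Your proposal is correct and follows essentially the same approach as the paper: both of you dispatch $\mathcal{I}\ref{item: inv: path or cycle}$ and $\mathcal{I}\ref{item: inv: one cop free}$ trivially, verify $\mathcal{I}\ref{item: inv: increase in guarded subgraph}$ by taking the vertex $w$ of $P_3$ adjacent to $u$ and invoking $\mathcal{I}\ref{item: inv: interface}$ at iteration $\eta$, and handle $\mathcal{I}\ref{item: inv: interface}$ by observing that the part of $D$ not on $J$ (namely $P_2$ in Case~(a), $P_1^-$ in Case~(b)) has no neighbour in $\mathcal{R}(J)$. Your treatment of $\mathcal{I}\ref{item: inv: interface}$ is in fact more careful than the paper's---you explicitly split $L(\eta+1)\setminus J$ into $L(\eta)\setminus D$ and $D\setminus J$ and argue each piece---whereas the paper just asserts the key separation fact and moves on.
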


\begin{proof}
    Firstly, Invariant ($\mathcal{I}$\ref{item: inv: path or cycle}) holds trivially. Invariant ($\mathcal{I}$\ref{item: inv: one cop free}) holds because at least one of $\mathcal{U}_1$ or $\mathcal{U}_2$ is freed. If the robber is in the interior of $D_L$, then $P_2$ has no neighbour in $\mathcal{R}(D_L)$, and if the robber is in the interior of $D_R$, then $P_1^-$ has no neighbour in $\mathcal{R}(D_R)$. Hence ($\mathcal{I}$\ref{item: inv: interface}) holds. We show that ($\mathcal{I}$\ref{item: inv: increase in guarded subgraph}) holds. Let $w$ be the vertex of $P_3$ adjacent to $u$. (One can show that $P_3$ is not empty by a proof similar to that of Claim \ref{claim: P_1 and P_2 are vertex disjoint}.) If $w \in V(G)$, then $w \in \mathcal{R}(D)$ and $w \notin L(\eta)$. If $w \notin V(G)$, then $w$ is adjacent in $G^\times$ to a vertex $z \in \mathcal{R}(D)$ such that $(u,z) \in E(G)$. By Invariant ($\mathcal{I}$\ref{item: inv: interface}) applied to iteration $\eta$, we have $w \notin L(\eta)$. Therefore, $L(\eta) \subsetneq L(\eta + 1)$, and since $D_L \cup D_R \subseteq L(\eta + 1)$, ($\mathcal{I}$\ref{item: inv: increase in guarded subgraph}) holds.
\end{proof}


\section{Concluding Remarks.}\label{sec: conclusion}

In this paper, we significantly widened the class of 1-planar graphs that were known to be cop-bounded by showing that not only maximal 1-planar graphs \cite{optimal1plane}, but any 1-planar graph embeddable without $\times$-crossings has a small cop-number (Theorem \ref{thm: main theorem}). Moreover, in Proposition \ref{prop: gamma + 21 cops}, we show that the class of 1-planar graphs that can be embedded with at most a constant number of $\times$-crossings is also cop-bounded.

\begin{proposition}\label{prop: gamma + 21 cops}
    Let $G$ be a 1-planar graph that can be embedded with at most $\gamma$ $\times$-crossings. Then $c(G) \leq \gamma + 21$.
\end{proposition}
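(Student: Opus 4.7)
The plan is to reduce the general case to Theorem~\ref{thm: main theorem} by using $\gamma$ stationary cops to neutralize every $\times$-crossing. Fix an embedding of $G$ with at most $\gamma$ $\times$-crossings. For each such $\times$-crossing, arbitrarily pick one of its four endpoints and place a cop there permanently; call these the \emph{blockers}, and let $V'$ denote the set of blocker positions (so $|V'| \leq \gamma$). Any move onto a blocker results in immediate capture, so the robber is confined to the subgraph $G' := G - V'$ throughout the game.

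Next, I would verify that $G' \in \hat{\mathcal{G}}$, i.e., that the sub-drawing of $G$ restricted to $G'$ has no $\times$-crossings. Every crossing surviving in $G'$ is also a crossing in $G$. If it was an $\times$-crossing in $G$, then by construction a blocker occupies one of its four endpoints, so one of its two crossing edges is absent in $G'$, and the crossing cannot actually survive---contradiction. Hence every surviving crossing was a non-$\times$-crossing in $G$, and therefore has some kite edge $f$ in $G$. The two endpoints of $f$ are among the four endpoints of the crossing, all of which survive in $G'$ by assumption, so $f \in E(G')$ as well. The surviving crossing is thus not an $\times$-crossing in $G'$, giving $G' \in \hat{\mathcal{G}}$.

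Finally, by Theorem~\ref{thm: main theorem}, $c(G') \leq 21$. We deploy $21$ additional cops on the component of $G'$ containing the robber and run the strategy guaranteed by Theorem~\ref{thm: main theorem}; every move in $G'$ is also a legal move in $G$, and the robber remains trapped in $G'$, so these cops eventually capture him. Together with the $\gamma$ blockers, this uses at most $\gamma + 21$ cops in total. The only delicate point---and the step most likely to hide a bug---is the verification in the second paragraph: one must check that removing an endpoint of some $\times$-crossing cannot inadvertently destroy the only kite edge of a non-$\times$-crossing while leaving that crossing itself intact. This is ruled out because a kite edge's endpoints are themselves endpoints of the crossing, so eliminating such an endpoint also eliminates one of the two crossing edges and hence the crossing in $G'$.
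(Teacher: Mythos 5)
Your proposal is correct and follows essentially the same route as the paper: place one stationary cop at an endpoint of each $\times$-crossing, observe that the robber is thereby confined to a subgraph in $\hat{\mathcal{G}}$, and invoke Theorem~\ref{thm: main theorem} with $21$ more cops. The only (harmless) difference is that you delete the whole blocked vertex, whereas the paper deletes just the one crossed edge incident with it; your extra check that vertex deletion cannot kill a kite edge of a surviving crossing is exactly the right point to verify, and your argument for it is sound.
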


\begin{proof}
    For each of the $\gamma$ $\times$-crossings, we place one cop on some endpoint of the crossing. Let $G'$ be the graph obtained by removing from each of the $\gamma$ crossings the crossed edge incident with the vertex containing the cop. By this, we get a graph $G' \in \hat{\mathcal{G}}$. It is easy to show that $c(G) \leq \gamma + c(G')$. We simply simulate the same cop strategy on $G$ as we had for $G'$. As $G$ is a super-graph of $G'$, the robber may make moves that are not possible in $G'$; however, any such move lands the robber on a vertex occupied by one of $\gamma$ cops. As $c(G') \leq 21$ by Theorem \ref{thm: main theorem}, we have $c(G) \leq \gamma + 21$.
\end{proof}

Theorem \ref{thm: main theorem} and Proposition \ref{prop: gamma + 21 cops} throw light on the fact that one of the possible reasons that 1-planar graphs have large cop-number is that such graphs cannot be embedded with few $\times$-crossings. However, our results do not imply that every 1-planar graph that requires many $\times$-crossings has a large cop-number. This leads to the following open problem.

\begin{problem}
    For every integer $k > 0$, classify which 1-planar graphs have cop-number at most $k$ and which 1-planar graphs have cop-number more than $k$.
\end{problem}

We have seen that one of the obstructions to generalising the cop strategy used for planar graphs to 1-planar graphs is the presence of $\times$-crossings. Incidentally, it was also shown in \cite{biedl_murali} that $\times$-crossings of 1-planar graphs pose the main challenge when generalising an algorithm for computing vertex connectivity from planar graphs to 1-planar graphs. It would be interesting to see for which other problems do $\times$-crossings obstruct such generalisations from planar graphs to 1-planar graphs.





\bibliographystyle{elsarticle-num-names} 
\bibliography{References}







\end{document}